\documentclass[11pt,a4paper]{article}
\usepackage[utf8]{inputenc}
\usepackage[T1]{fontenc}
\usepackage[english]{babel}
\usepackage{tikz}\usetikzlibrary{calc,arrows}
\usepackage{geometry}
\usepackage{authblk}

\usepackage{amsmath}
\usepackage{amssymb}
\usepackage{amsthm}
\newtheorem{theorem}{Theorem}[section]

\newtheorem{definition}[theorem]{Definition}
\newtheorem{proposition}[theorem]{Proposition}

\newtheorem{example}[theorem]{Example}
\newtheorem{remark}[theorem]{Remark}
\newtheorem*{notation*}{Notation}
\makeatletter
\@addtoreset{equation}{section}
\makeatother

\usepackage{stmaryrd}
\usepackage{bbm}

\usepackage{graphicx}
\usepackage{microtype}
\usepackage[dvipsnames]{xcolor}
\usepackage[colorlinks=true,
pdfstartview=FitV,
linkcolor= MidnightBlue,
citecolor= MidnightBlue,
urlcolor= MidnightBlue,
hyperindex=true,
hyperfigures=false]
{hyperref}

\usepackage{cite}
\usepackage{appendix}

\newcommand{\rc}{\check{r}}
\newcommand{\Rc}{\check{R}}

\title{\textbf{
Twisted symmetric exclusion processes \\
and set-theoretical $R$-matrices}}
\author[1]{Mathieu Dabrowski\thanks{email: \href{mailto:mathieudabrowski@outlook.com}{mathieudabrowski@outlook.com}}\thanks{Author to whom any correspondence should be addressed.}}
\author[2]{Loïc Poulain d’Andecy\thanks{email: \href{mailto:loic.poulain-dandecy@univ-reims.fr}{loic.poulain-dandecy@univ-reims.fr}}}
\author[1]{Eric Ragoucy\thanks{email: \href{mailto:ragoucy@lapth.cnrs.fr}{ragoucy@lapth.cnrs.fr}}}
\affil[1]{Laboratoire d'Annecy de Physique Théorique, 9 Chemin de Bellevue - BP 110 - Annecy-le-Vieux - F-74941 Annecy Cedex - France}
\affil[2]{Laboratoire de Mathématiques de Reims, UMR CNRS 9008, Universit\'e Reims Champagne-Ardennes, Moulin de la Housse - BP 1039, 51687 Reims Cedex 2 - France}

\date{}
\begin{document}

\maketitle

\begin{abstract}

We investigate periodic integrable Markov models, constructed from set-theoretical solutions of the Yang-Baxter equation. We first focus on the simplest class of solutions, called Lyubashenko solutions. We show that the resulting models are equivalent to some twisted Symmetric Simple Exclusion Process (SSEP), which are usual multi-species periodic SSEP models where a twist is added on a bond of the ring. We also provide various possible interpretations for these Markov models. 

Then, we study the long time dynamics of the twisted SSEP, characterising its different stationary states and counting them. Allowing the twist to vary, we examine the possible transitions between the different stationary states. Finally, we extend our construction of Markov models to set-theoretical solutions that are more general than Lyubashenko solutions and show that such models are not equivalent to a twisted SSEP in general.

\end{abstract}

\medskip
\noindent\textbf{Keywords:} Integrable Markov models ; Exclusion process ; Twisted SSEP ; set-theoretical Yang--Baxter equation

\section{Introduction}

Exclusion Markov processes were first introduced in the context of biology to model protein synthesis in \cite{mac1968} and in statistical physics to describe the dynamics of the Ising model in \cite{Kawa1966}. The mathematical foundations of these systems were established in \cite{Spitzer1970}.

Among these processes, the simplest one is  the Symmetric Simple Exclusion Process (SSEP). It describes a one-dimensional lattice gas, which is equivalent to a classical Ising model \cite{lee1952}, where  particles  have equal probabilities of jumping one site to the left or right, with the condition that the target site is empty (hard core interaction). On a periodic lattice, the SSEP reaches thermodynamic equilibrium, contrary to the Asymmetric Simple Exclusion Process (ASEP) where a macroscopic current appears \cite{Kirone2015}. Generalisations of those models with more than one  species of particles have been studied \cite{Evans1995,Karimipour1999,vanicat2017}.

The ASEP (respectively the SSEP) can be mapped to the anisotropic (respectively isotropic) Heisenberg quantum chain \cite{Schutz2001,Alcaraz1994}. More precisely, the Markov matrix containing the transition probabilities for the exclusion process is equivalent to the Hamiltonian of a quantum spin chain. This allows in some cases to use methods originally developed for quantum spin chains to study exclusion processes \cite{Golinelli2004,Gier2005}. 

Models constructed in this article have the property of being integrable. In the framework of quantum spin chains, the chain is said to be integrable if its Hamiltonian belongs to a set of commuting operators generated by a transfer matrix. A transfer matrix is constructed from an $ R $ matrix which is a solution of the Yang--Baxter equation (YBE). This integrability property allows the exact computation of physical quantities of the spin chain \cite{faddeev1996}. The definition of an integrable Markov process is obtained by replacing the Hamiltonian with the Markov matrix. Integrability can be used to obtain for instance information on the stationary state and on the dynamical behaviour of the stochastic process.
\medskip

In order to construct integrable Markov processes, the first step consists in finding solutions to the quantum Yang--Baxter equation \cite{Yang1967,Baxter1982Exactly}, which is a notoriously difficult problem in mathematical physics. To address it, Drinfeld in \cite{Drinfeld1992} suggested to consider ``set-theoretical'' solutions of the YBE, that is, solutions of the equation given by a permutation of a set of the form $ X \times X $. These solutions are not produced with the usual quantum group approach \cite{Isaev2004}. Since Drinfeld's proposal, set-theoretical solutions have been extensively studied \cite{etingof1998,Gateva1998,Gateva2004,Chouraqui2009,cedo2008,Rump2005,Rump2007,Cedo2012,Guarnieri2015,Doikou_2021,Doikou_2021_2,Smoktunowicz2018,Bachiller2016}. One simple class of set-theoretical solutions of the YBE is called Lyubashenko solutions \cite{Drinfeld1992}.

\medskip

In this paper, we construct and study new integrable Markov processes defined on a finite one dimensional lattice with periodic boundary conditions. These processes are constructed from set-theoretical solutions of the YBE using the Baxterization method \cite{Jones1990}. In particular, we consider involutive Lyubashenko solutions that allow the construction of models with several interesting interpretations.

We prove that those Lyubashenko solutions define Markov processes that are equivalent to some twisted periodic SSEP. The twisted SSEP is a generalisation of the multi-species SSEP on a ring where the exchange rule on one bond of the lattice allows particles to change species. In the context of quantum spin chains, systems with generalised periodic boundary conditions have been studied, such as in \cite{Batchelor1995,cao2013,Hao2016}, but they are less common in the framework of exclusion Markov processes.

Then we proceed to the study of the evolution of the twisted periodic SSEP for asymptotically long times. The process can be divided into sectors, for which there exists a unique stationary probability distribution. We define the profile and the total charge of a configuration and we prove that they uniquely label a sector. We also give explicit formulas for their number and size. Those formulas illustrate the idea that the twist connects different sectors of the multi-species SSEP.

Further we analyse how a stationary state of the twisted periodic SSEP is affected by a modification of the twist, in the same spirit as a quench procedure in condensed matter physics. Several interesting phenomena can occur such as the spreading and splitting of sectors. We define the branching probability of a stationary state going from one sector to another. Turning on and off the twist several times allows to alternate between different possible stationary states. We study in details some examples.

Finally, we consider more general set-theoretical solutions of the YBE than Lyubashenko solutions. We give an example of such solution and show that the Markov model constructed from it is not equivalent to any twisted SSEP coming from Lyubashenko solutions, thereby leaving room to more general models coming from set-theoretical solutions of the YBE which could be studied in further work.
\medskip

The plan of the paper goes as follows. Section \ref{sec_Msettheo} introduces periodic Markov models constructed from Lyubashenko solutions of the YBE that have the property of being integrable and that have various interpretations.

Section \ref{sec_MtwistedSSEP} presents the twisted periodic SSEP. In Section \ref{ssec:corr_set}, we prove the equivalence between models constructed from Lyubashenko solutions and some twisted periodic SSEP.

Section \ref{sec:comm_class_statio} focuses on sectors and stationary states of the twisted periodic SSEP. In Section \ref{ssec:statio}, we show that the probability distribution of the stationary states is uniform. We classify sector in Section \ref{subsec_invcomm} and in Sections \ref{ssec:comm_nb} and \ref{ssec:comm_card}, we count the number and find the sizes of sectors.

Section \ref{sec:twist_branch} analyses how stationary states of the twisted SSEP are affected by a modification of the twist. In Section \ref{ssec:spread_split_osci}, we describe possible behaviours that can occur when turning on and off the twist.

Section \ref{sec:set_theo_gene} introduces Markov models constructed from more general set-theoretical solutions than Lyubashenko solutions that are not equivalent to some twisted periodic SSEP in general. In Section \ref{ssec:non_eq}, we give an example of such model.

\section{Markov models from set-theoretical solutions}\label{sec_Msettheo}

In this section, we introduce Markov models associated to some set-theoretical solutions of the Yang--Baxter equation called Lyubashenko solutions. We show that those models are integrable and provide a non-exhaustive list of interesting interpretations.

\subsection{Markov models on a periodic lattice}\label{ssec:mark_perio}

We consider a one-dimensional periodic lattice composed of $ L $ sites indexed by the variable $ i\in \llbracket 1,L \rrbracket $. On each site $ i $, the local configuration variable $\tau_i$ takes values in $\llbracket 0,N-1 \rrbracket$ with $N$ a positive integer. 
The configurations $ \mathfrak{C} $ of the system correspond to $ L $-tuples $\vec \tau= (\tau_{1},...,\tau_{L})\in \llbracket 0,N-1 \rrbracket^{L} $. The canonical basis vector of a configuration $ \vec \tau $ is denoted $ \lvert\vec \tau \rangle $. All together they form an orthonormal basis i.e.  $\langle\vec\tau \mid\vec\tau' \rangle=\delta_{\vec\tau,\vec\tau'}$.

The stochastic dynamics of the model is given by the Markov matrix
\begin{equation}M=\sum_{\vec \tau,\vec \tau'\in \mathfrak{C}}m(\vec \tau' \rightarrow \vec \tau)\,\lvert \vec \tau\rangle \langle \vec \tau'\rvert\, ,\end{equation}
where $ m(\vec \tau' \rightarrow \vec \tau)\geq 0 $ for $ \vec \tau\neq \vec \tau' $ is interpreted as a transition rate between configurations $ \vec \tau' $ and $ \vec \tau $, and by definition, we have 
\begin{equation}\label{coeffdiagM}m(\vec \tau \rightarrow \vec \tau)=-\sum_{\vec \tau'\in\mathfrak{C},\vec \tau'\neq \vec \tau}m(\vec \tau \rightarrow \vec \tau')\ .\end{equation} 
 Starting from an initial probability $P_0$ on the configuration space, the probability of the system to be in configuration $ \vec \tau $ at continuous time $ t $ is denoted $ P_{t}(\vec \tau) $. The probabilities of all configurations at time $ t $ are stored in the vector $ \lvert P_{t}\rangle = \sum_{\vec \tau } P_{t}(\vec \tau) \lvert \vec \tau\rangle$.
The evolution of $\lvert P_{t}\rangle$ is governed by the master equation 
\begin{equation}
\mathrm{d}\lvert P_{t}\rangle/\mathrm{dt}=M\lvert P_{t}\rangle .
\end{equation}

We will consider Markov matrices which are local, by which we mean of the following form:
\begin{equation}\label{eq:M_model}
	M=\sum_{i=1}^{L-1}m_{i,i+1}+\tilde{m}_{L,1}\, ,
\end{equation}
where $m$ is a local jump operator acting on two sites, and $m_{i,i+1}$ is its action on sites $ i $ and $ i+1 $. The additional term $\tilde{m}_{L,1}$ is a local jump operator $\tilde{m}$ acting on sites $L$ and $1$. When $\tilde{m}=m$, the model is  periodic. If $\tilde{m}_{L,1}=\mathcal{T}_L\,m_{L,1}\,\mathcal{T}_L^{-1}$ for some $N$-dimensional matrix $\mathcal{T}$, the model is called twisted periodic (by the twist $\mathcal{T}$).

\subsection{Lyubashenko solutions of the Yang-Baxter equation}\label{ssec:lyub_sol}

Fix throughout this section a bijective map denoted:
\[g\ :\ \llbracket 0,N-1\rrbracket\to\llbracket 0,N-1\rrbracket\ .\] 
From $g$ we define the following operator on two sites:
\begin{equation}\label{eq:r_lyub1}
\check r=\sum_{i,j\in \llbracket 0,N-1\rrbracket}|g(j),g^{-1}(i)\rangle\langle i,j|\ .\end{equation}
It satisfies the braided Yang--Baxter equation:
\begin{equation}\label{eq:cYB}
\check r_{12}\,\check r_{23}\,\check r_{12}=\check r_{23}\,\check r_{12}\,\check r_{23}\ ,
\end{equation}
and is called a Lyubashenko solution of the Yang--Baxter equation. Further, it is easy to check that it is involutive in the sense that
\begin{equation}\label{eq:invo}
\check r^{2}=\mathrm{Id}\ .\end{equation}

\begin{remark}
The solution $\check r$ is set-theoretical because it comes from a bijective map on the set $\llbracket 0,N-1 \rrbracket^2$, namely the Yang--Baxter map is given by $(x,y)\mapsto (g(y),g^{-1}(x))$.
\end{remark}

From $\check r$, we define the local jump operator 
\begin{equation}\label{eq:loc_jump}
m=\check r-\mathrm{Id}\,,
\end{equation}
and the Markov matrix
\begin{equation}\label{eq:M_Lyub}
	M=\sum_{i=1}^{L-1}m_{i,i+1}+m_{L,1}\ .
\end{equation}
It is easy to see that $M$ is indeed a Markov matrix, namely that its off-diagonal coefficients are non-negative and that each column sums to $0$. Note that this is a periodic model. 

Also $\check r$ is symmetric and therefore $M$ too, which is expressed in terms of the transition rates as:
\[m(\vec \tau \rightarrow \vec \tau')=m(\vec \tau' \rightarrow \vec \tau)\,,\ \ \ \ \forall \vec \tau,\vec \tau'\in\mathfrak{C}\ .\]

\subsection{Integrability of the model}\label{ssec:int}

Here we show that the model is integrable in the sense that the Markov matrix belongs to a set of commuting operators generated by a transfer matrix.
We follow the approach developed by the St Petersburg school on quantum integrable systems, see e.g. \cite{faddeev1996,FST}.

The matrix $\check r$ can be Baxterized as follows:
\begin{equation}\label{eq:R}
	\check R(z)=\frac{z\, \rc+ \mathrm{Id}}{z+1}\ .
\end{equation}
Indeed, since $ \rc^{2}=\mathrm{Id} $, it is easy to check that the Yang--Baxter equation with spectral parameters is satisfied:
\begin{equation}\label{eq:YBE-z}\check R_{12}(u)\,\check R_{23}(u+v)\,\check R_{12}(v)=\check R_{23}(v)\,\check R_{12}(u+v)\,\check R_{23}(u)\ .\end{equation}
The transfer matrix is constructed with the formula
\begin{equation}
	t(z)=\mathrm{tr}_{0}\, R_{0,L}(z)\,R_{0,L-1}(z)\cdots R_{0,1}(z)\, ,
\end{equation}
where $R(z)=P\check R(z)$ with $P$ the permutation operator. The Yang--Baxter equation \eqref{eq:YBE-z} ensures that 
$[t(z)\,,\,t(z')]=0$, so that upon expansion in $z$, $t(z)$ indeed generates a set of commuting operators.
Moreover, since $R(0)=P$, the trace in $t(0)$ and $t'(0)$ can be easily computed, and we recover with a straightforward calculation the Markov matrix of (\ref{eq:M_Lyub}) as
\begin{equation}\label{eq:M_lyub2}M=t(0)^{-1}t'(0)=\sum_{i=1}^{L-1}m_{i,i+1}+m_{L,1}\ .
\end{equation}

\subsection{Interpretations of the models}\label{ssec:interpret}

In the next section, the Markov model above will be transformed to a twisted SSEP model, closely related to well-studied models. However, we suggest here possible direct interpretations of the Markov model constructed above.

Recall that we are given a bijection $g$ of the set $\llbracket 0,N-1 \rrbracket$, in terms of which the transition rates imply local processes on two sites of the form
\begin{equation}\label{localprocess}(\,\tau_{i}\,,\,\tau_{i+1}\,)\ \ \to\ \ (\,g(\tau_{i+1})\,,\,g^{-1}(\tau_{i}\,)\,)\ \ \ \ i=1,\dots,L,\end{equation}
where of course $L+1$ means $1$.
\begin{remark}
If $g$ is the trivial bijection (the identity map) then the model is simply the multi-species SSEP.
\end{remark}

\subsubsection{Symmetric exclusion process with oscillations}

The first natural interpretation is the usual one, where we have particles of $ N $ different species labeled by a variable $ s\in \llbracket 0,N-1 \rrbracket $ which evolve on this periodic lattice. The particles are subjected to a hard core interaction, i.e. each site is occupied by at most one particle.

Note however that starting from a state with two species $a,b$ located in adjacent sites, (\ref{localprocess}) says that it may jump to a state with two species $g(b),g^{-1}(a)$ located in these sites. Roughly speaking, the particles are oscillating between different species while moving on the lattice.

\subsubsection{Species with internal states}

Here we make use of the cyclic decomposition of the bijection $g$ to provide different interpretations. We denote the decomposition of the permutation $g$ into disjoint cycles by 
\[g=\pi_1\dots\pi_{n}\, ,\]
where $\pi_1,\dots,\pi_{n}$ are cycles with disjoint support in $\llbracket 0,N-1\rrbracket$. We call $c_i$ the length of the cycle $\pi_i$. Clearly, since $g$ is a permutation of $\llbracket 0,N-1 \rrbracket$ we must have $\sum_{i=1}^nc_i=N$. We reindex the set $\llbracket 0,N-1\rrbracket$ to make use of such a decomposition. We set:
\begin{equation*}
\llbracket 0,N-1 \rrbracket=\{1^{(0)},\dots,1^{(c_1-1)},2^{(0)},\dots,2^{(c_2-1)},\dots\dots,n^{(0)},\dots,n^{(c_n-1)}\}\, ,
\end{equation*}
so that the permutation becomes in cyclic notation:
\begin{equation*}
g=(1^{(0)},\dots,1^{(c_1-1)})(2^{(0)},\dots,2^{(c_2-1)})\dots\dots(n^{(0)},\dots,n^{(c_n-1)})\, .
\end{equation*}
Now the interpretation is that we have $n$ different species and for each species $s$, we have an internal state labelled by a ``quantum number'', that we will call for brevity a ``charge'', which takes values in $\{0,\dots,c_s-1\}$. With these notations, the action of $g$ does not transform species, it  simply increases the value of the charge by $1$ (while, obviously, $g^{-1}$ reduces the charge by $1$). The local processes are now of the form
\begin{equation}\label{localprocess2}{(\,s^{(e)}\,,\,t^{(e')}\,)\ \ \to\ \ (\,t^{(e'+1)}\,,\,s^{(e-1)}\,)\ .}\end{equation}
Note that for species $s$, the value of the charge is understood modulo $c_s$.

\begin{example}
In this  interpretation, a particularly simple case is when $g$ has only one cycle: $g=(1^{(0)},\dots,1^{(N-1)})$. Then we have only one species on our lattice which carries a quantum number, or a charge, taking values in the integers modulo $N$.
\end{example}

\begin{remark}
It may happen that there is one cycle (or more) in $g$ of length one. The corresponding species then carries no charge and will be called a  neutral species. Any neutral species can be selected as the vacuum. If there is more than one neutral species, up to a relabeling of the neutral species, any choice leads to the same physical model.
\end{remark}

\subsubsection{Rolling polygons}

Another possible interpretation is closely related to the previous one. One can see each species $s$ as a regular polygon with $c_s$ sides, each side being labeled by an integer which corresponds to the 'charge' introduced in the previous interpretation. In this context, the polygons are rolling clockwise when moving in one direction of the lattice and rolling counterclockwise in the opposite direction, as illustrated in Figure 1.

\begin{center}
\begin{tikzpicture}[x=1cm,y=1cm, font=\small, line join=round, line cap=round]
\tikzset{
  poly/.style={thick},
  support/.style={thick},
  arr/.style={-latex, thick},
  lab/.style={font=\scriptsize},
  lab2/.style={font=\large}
}

\def\s{1.3}                  
\def\t{1.3}                  
\def\h{0.25}                 
\def\gap{0.}               
\def\off{0.15}               

\begin{scope}[shift={(0,0)}]

  \draw[support] (-0.2,0) -- (4.2,0);
  \draw[support] (-0.2,0) -- (-0.2,\h);
  \draw[support] (2.0,0) -- (2.0,\h);
  \draw[support] (4.2,0) -- (4.2,\h);

  \coordinate (A1) at (1.0-\t/2,\h+\gap);
  \coordinate (A2) at (1.0+\t/2,\h+\gap);
  \coordinate (A3) at (1.0, \h+\gap + 0.866*\t);  

  \draw[poly] (A1)--(A2)--(A3)--cycle;

  \node[lab] at ($(A1)!0.5!(A3)+(-\off,0)$) {1};
  \node[lab] at ($(A2)!0.5!(A3)+(\off,0)$)  {2};
  \node[lab] at ($(A1)!0.5!(A2)+(0,\off)$)  {3};
  \draw[arr] (1.0,1.2) -- (1.7,1.2);

  \draw[arr] (0.3,1.3) arc[start angle=180, end angle=20, radius=0.8];

  \coordinate (B1) at (2.4,\h+\gap);
  \coordinate (B2) at ($(B1)+(\s,0)$);
  \coordinate (B3) at ($(B2)+(0,\s)$);
  \coordinate (B4) at ($(B1)+(0,\s)$);

  \draw[poly] (B1)--(B2)--(B3)--(B4)--cycle;

  \node[lab] at ($(B4)!0.5!(B3)+(0,\off)$)  {3};    
  \node[lab] at ($(B4)!0.5!(B1)+(-\off,0)$) {2};    
  \node[lab] at ($(B2)!0.5!(B3)+(\off,0)$)  {4};    
  \node[lab] at ($(B1)!0.5!(B2)+(0,\off)$)  {1};    

  \draw[arr] (4.1,1.5) arc[start angle=0, end angle=160, radius=0.8];

  \draw[arr] (2.5,1.4) -- (1.9,1.4);

\end{scope}

\draw[arr] (4.9,1.0) -- (6.4,1.0);

\begin{scope}[shift={(7.0,0)}]

  \draw[support] (-0.2,0) -- (4.2,0);
  \draw[support] (-0.2,0) -- (-0.2,\h);
  \draw[support] (2.0,0) -- (2.0,\h);
  \draw[support] (4.2,0) -- (4.2,\h);

  \coordinate (C1) at (0.4,\h+\gap);
  \coordinate (C2) at ($(C1)+(\s,0)$);
  \coordinate (C3) at ($(C2)+(0,\s)$);
  \coordinate (C4) at ($(C1)+(0,\s)$);

  \draw[poly] (C1)--(C2)--(C3)--(C4)--cycle;

  \node[lab] at ($(C4)!0.5!(C3)+(0,\off)$)  {4};
  \node[lab] at ($(C4)!0.5!(C1)+(-\off,0)$) {3};
  \node[lab] at ($(C2)!0.5!(C3)+(\off,0)$)  {1};
  \node[lab] at ($(C1)!0.5!(C2)+(0,\off)$)  {2};

  \coordinate (D1) at (2.4,\h+\gap);
  \coordinate (D2) at ($(D1)+(\t,0)$);
  \coordinate (D3) at ($(D1)+(0.5*\t,0.866*\t)$);

  \draw[poly] (D1)--(D2)--(D3)--cycle;

  \node[lab] at ($(D1)!0.5!(D3)+(-\off,0)$) {3};
  \node[lab] at ($(D2)!0.5!(D3)+(\off,0)$)  {1};
  \node[lab] at ($(D1)!0.5!(D2)+(0,\off)$)  {2};

\end{scope}
\node[lab2, align=center, anchor=north] at ($(current bounding box.south)+(0,-0.5)$) {\textsl{Figure 1: Transitions for rolling triangles and squares.}};
\end{tikzpicture}
\end{center}

For example, if $g$ consists of a single full cycle, $g=(0,1,\dots,N-1)$, then the lattice is filled by regular $N$-gons.

\subsubsection{Filling and emptying colored boxes}

Finally one can also interpret a local configuration of the form $s^{(e)}$ as a box ``of color $s$'' containing a number $e$ of items (particles, balls, etc.).

In this interpretation, we have boxes moving on the periodic lattice. The boxes are labeled or colored by an integer $1,\dots,n$ and a box of type $s$ may contain up to $c_s$ items. When two adjacent boxes are switched on the lattice, one acquires one more item and the other one loses one item.

Again, the number of items inside a box of color $s$ is considered modulo $c_s$. This means in particular that a box with $c_s-1$ items becomes empty when acquiring an additional item (which can be seen as a discharging of the box when full). More delicate to interpret is when an empty box (with $0$ item) ``loses'' an item and becomes a box containing $c_s-1$ items. This interpretation is illustrated in Figures 2 and 3.
\begin{center}
\begin{tikzpicture}[>=stealth, line cap=round, line join=round]
  \def\TallH{3.cm}
  \def\ShortH{2.25cm}
  \def\BoxW{0.9cm}

  \tikzset{
    tallbox/.style={draw, thick, minimum width=\BoxW, anchor=south, minimum height=\TallH},
    shortbox/.style={draw, thick, minimum width=\BoxW, anchor=south, minimum height=\ShortH},
    emptydot/.style={draw, thick, circle, minimum size=6mm, inner sep=0pt},
    fulldot/.style={draw, thick, fill=black!60, circle, minimum size=6mm, inner sep=0pt},
    lab/.style={font=\large}
  }

  \node[tallbox]  (L4) at (1.4,0) {};
  \node[emptydot] at ($(L4.south)+(0,0.4)$) {};
  \node[emptydot] at ($(L4.south)+(0,1.12)$) {};
  \node[lab]      at ($(L4.south)+(-0.6,2.7)$) {\small 4};

  \node[shortbox] (L3) at (0,0) {};
  \node[fulldot]  at ($(L3.south)+(0,0.4)$) {};
  \node[fulldot]  at ($(L3.south)+(0,1.12)$) {};
  \node[lab]      at ($(L3.south)+(-0.6,2.0)$) {\small 3};

  \node[tallbox]  (R4) at (4.15,0) {};
  \node[emptydot] at ($(R4.south)+(0,0.4)$) {};
  \node[emptydot] at ($(R4.south)+(0,1.12)$) {};
  \node[emptydot] at ($(R4.south)+(0,1.83)$) {};
  \node[lab]      at ($(R4.south)+(-0.6,2.7)$) {\small 4};

  \node[shortbox] (R3) at (5.55,0) {};
  \node[fulldot]  at ($(R3.south)+(0,0.4)$) {};
  \node[lab]      at ($(R3.south)+(-0.6,2.0)$) {\small 3};

  \draw[<->, thick] (2.2, 1.12) -- (3.35, 1.12);

  \node[align=center, anchor=north] at ($(current bounding box.south)+(0,-0.5)$) 
  {
    \textsl{Figure 2: generic transitions in the colored boxes interpretation.} \\ 
    \textsl{The number on the left of a box indicates its maximal number of balls.}
  };

\end{tikzpicture}
\end{center}

\medskip
\begin{center}
\begin{tikzpicture}[>=stealth, line cap=round, line join=round]
  \def\TallH{3.cm}
  \def\ShortH{2.25cm}
  \def\BoxW{0.9cm}
  \def\shfthor{8} 

  \tikzset{
    tallbox/.style={draw, thick, minimum width=\BoxW, anchor=south, minimum height=\TallH},
    shortbox/.style={draw, thick, minimum width=\BoxW, anchor=south, minimum height=\ShortH},
    emptydot/.style={draw, thick, circle, minimum size=6mm, inner sep=0pt},
    fulldot/.style={draw, thick, fill=black!60, circle, minimum size=6mm, inner sep=0pt},
    lab/.style={font=\large}
  }


  \node[tallbox]  (L5) at (1.35,0) {};
  \node[emptydot] at ($(L5.south)+(0,0.4)$) {};
  \node[emptydot] at ($(L5.south)+(0,1.12)$) {};
  \node[emptydot] at ($(L5.south)+(0,1.83)$) {};
  \node[emptydot] at ($(L5.south)+(0,2.54)$) {};
  \node[lab]      at ($(L5.south)+(-0.6,2.7)$) {\small 4};

  \node[shortbox] (L6) at (0,0) {};
  \node[fulldot]  at ($(L6.south)+(0,0.4)$) {};
  \node[fulldot]  at ($(L6.south)+(0,1.12)$) {};
  \node[lab]      at ($(L6.south)+(-0.6,2.0)$) {\small 3};

  \node[shortbox] (R5) at (5.5,0) {};
  \node[fulldot]  at ($(R5.south)+(0,0.4)$) {};
  \node[lab]      at ($(R5.south)+(-0.6,2.0)$) {\small 3};

  \node[tallbox]  (R6) at (4.15,0) {};
  \node[emptydot] at ($(R6.south)+(0,0.4)$) {};
  \node[lab]      at ($(R6.south)+(-0.6,2.7)$) {\small 4};

  \draw[<->, thick] (2.15,1.12) -- (3.35,1.12);
  \node[lab] at (2.75,-0.3) {(3.a)};


  \node[tallbox]  (L4) at ($(1.35,0)+(\shfthor,0)$) {};
  \node[emptydot] at ($(L4.south)+(0,0.4)$) {};
  \node[emptydot] at ($(L4.south)+(0,1.12)$) {};
  \node[lab]      at ($(L4.south)+(-0.6,2.7)$) {\small 4};

  \node[shortbox] (L3) at ($(0,0)+(\shfthor,0)$) {};
  \node[fulldot]  at ($(L3.south)+(0,0.4)$) {};
  \node[lab]      at ($(L3.south)+(-0.6,2.0)$) {\small 3};

  \node[shortbox] (R3) at ($(5.5,0)+(\shfthor,0)$) {};
  \node[fulldot]  at ($(R3.south)+(0,0.4)$) {};
  \node[fulldot]  at ($(R3.south)+(0,1.12)$) {};
  \node[fulldot]  at ($(R3.south)+(0,1.83)$) {};
  \node[lab]      at ($(R3.south)+(-0.6,2.0)$) {\small 3};

  \node[tallbox]  (R4) at ($(4.15,0)+(\shfthor,0)$) {};
  \node[emptydot] at ($(R4.south)+(0,0.4)$) {};
  \node[emptydot] at ($(R4.south)+(0,1.12)$) {};
  \node[emptydot] at ($(R4.south)+(0,1.83)$) {};
  \node[lab]      at ($(R4.south)+(-0.6,2.7)$) {\small 4};

  \draw[<->, thick] ($(2.15,1.12)+(\shfthor,0)$) -- ($(3.35,1.12)+(\shfthor,0)$);
  \node[lab] at ($(2.75,-0.3)+(\shfthor,0)$) {(3.b)};

  \node[align=center, anchor=north] at ($(current bounding box.south)+(0,-0.4)$) 
  {
    \textsl{Figure 3: Overfilling (3.a) / emptying (3.b) boxes.} \\
    \textsl{Special transitions in the colored boxes interpretation.}
  };

\end{tikzpicture}
\end{center}

Note that if $g$ consists of a single full cycle, $g=(0,1,\dots,N-1)$, then we have only one type of boxes which contain up to $N-1$ items (seen modulo $N$).

\section{Twisted SSEP}\label{sec_MtwistedSSEP}

In this section, we show that the Markov models from the previous section are mathematically equivalent to some twisted periodic SSEP models. 
We first define and interpret these models, show directly their integrability, and then proceed to the description of the correspondence with the previous section.

\subsection{Definition of the model}\label{ssec:def_tw_SSEP}

As in the preceding section, our model will depend on a bijection denoted here \[f\ :\ \llbracket 0,N-1\rrbracket\to\llbracket 0,N-1\rrbracket\, .\] 
\begin{remark}
It will be explicitly related (but not equal) to the bijection $g$ of the preceding section.
\end{remark}
For the Markov matrix, we take the following:
\begin{equation}\label{eq:Mf_model}
	M_f=\sum_{i=1}^{L-1}m_{i,i+1}+m_{L,1}^{f}\ .
\end{equation}
 Here, for $ i\in \llbracket 1,L-1 \rrbracket $, the local jump operator $ m_{i,i+1}$ acting on sites $i$ and $i+1$ corresponds to the multi-species SSEP operator
\begin{equation}\label{eq:local_m}
	m_{i,i+1}=\sum_{\tau_{i},\tau_{i+1}\in \llbracket 0,N-1 \rrbracket}\lvert \tau_{i+1},\tau_{i} \rangle \langle \tau_{i},\tau_{i+1} \rvert-\mathrm{Id}\ .
\end{equation}
The local jump operator between sites $L$ and $1$ is given by
\begin{equation}\label{eq:local_mtw}	m_{L,1}^{f}=\sum_{\tau_{L},\tau_{1}\in \llbracket 0,N-1 \rrbracket}\lvert f(\tau_{1}),f^{-1}(\tau_{L}) \rangle \langle \tau_{L},\tau_{1} \rvert-\mathrm{Id}\ .
\end{equation}
Identifying the bijection $f$ with the linear operator acting on one site:
\[f=\sum_{\tau\in\llbracket 0,N-1\rrbracket}|f(\tau)\rangle\langle \tau|\,,\]
the local jump operator $m_{L,1}^{f}$ is of the form
\[m_{L,1}^{f}=f_L\,m_{L,1}\,f_{L}^{-1}\ ,\]
so that we have a twisted periodic model, where the local SSEP operator between sites $L$ and $1$ is twisted by the bijection $f$.

As in the preceding section, it is easy to see that $M_f$ is a symmetric matrix.

\subsection{Interpretations}\label{ssec:interpretations}

The twist $f$ added between sites $L$ and $1$ has the same various interpretations as those given in the preceding sections. Here note that the dynamics between sites $i$ and $i+1$ for $i=1,\dots,L-1$ is the usual one for SSEP, given by
\[(\,\tau\,,\,\tau'\,)\ \ \to\ \ (\,\tau'\,,\,\tau\,)\ \, ,\]
while between sites $L$ and $1$, it is given by
\[(\,\tau\,,\,\tau'\,)\ \ \to\ \ (\,f(\tau')\,,\,f^{-1}(\tau)\,)\ .\]
Therefore, if we think about $\llbracket 0,N-1\rrbracket$ as representing $N$ different species, the species are affected by the dynamics only between sites $L$ and $1$.

As before, reindexing the set $\llbracket 0,N-1\rrbracket$ to make use of the cyclic decomposition of $f$, we set
\[f=(1^{(0)},\dots,1^{(c_1-1)})(2^{(0)},\dots,2^{(c_2-1)})\dots\dots(n^{(0)},\dots,n^{(c_n-1)})\, ,\]
and we use the interpretation of $n$ different species carrying an internal degree of freedom, that we call a ``charge''.

Then again the dynamics between sites $i$ and $i+1$ for $i=1,\dots,L-1$ is the usual one for SSEP (simple flip of the two local entries) while between sites $L$ and $1$, the charges are changing according to:
\begin{equation}\label{localprocess2b}(\,s^{(e)}\,,\,t^{(e')}\,)\ \ \to\ \ (\,t^{(e'+1)}\,,\,s^{(e-1)}\,)\ .\end{equation}
Thus, a non-trivial function $f$ corresponds to a purely transmitting impurity (or an ``electric'' field) located between sites $L$ and 1 and which affects the ``charge'' of the particles when they pass through.

As before, the various interpretations with polygons rolling or with colored boxes filling and emptying are possible. The new feature compared to the preceding section is that the rolling of polygons, or the filling/emptying of boxes only happens when crossing the boundary between sites $L$ and $1$.

\subsection{Integrability}\label{ssec:int_twist_SSEP}

We proceed to show that the above models are also integrable by adding the twist in the construction of the transfer matrix. Again, this property is very well-known in the context of quantum integrable systems, see e.g. \cite{faddeev1996,FST}.

The Baxterized $R$-matrix this time is given by
\begin{equation}\label{eq:R_twist}
	R(z)=\frac{z\mathrm{Id}+P}{z+1}\, ,
\end{equation}
where $P$ is the flip operator, and satisfies the Yang Baxter equation with spectral parameter : $$ R_{12}(u)R_{13}(u+v)R_{23}(v)=R_{23}(v)R_{13}(u+v)R_{12}(u)\ .$$
In general, an integrable twist operator is an invertible matrix $ \mathcal{T}(z) $ satisfying
\begin{equation}\label{eq:int_twist}
	\Rc(u-v)\mathcal{T}(u) \otimes \mathcal{T}(v)= \mathcal{T}(v) \otimes  \mathcal{T}(u) \Rc(u-v)\ ,
\end{equation}
where  $\Rc(z)=PR(z)$. The homogeneous twisted transfer matrix is given by (see e.g. \cite{vanicat2017})
\begin{equation}
	t_{\mathcal{T}}(z)=\mathrm{tr}_{0} R_{0,L}(z)R_{0,L-1}(z)\cdots R_{0,1}(z)\mathcal{T}_{0}(z)\, ,
\end{equation}
and satisfies the commutativity relation $[t(y),t(z)]=0$. The general expression of the twisted Markov matrix is 
\begin{equation}\label{eq:M_twist}
	M_{\mathcal{T}}=t_{\mathcal{T}}(0)^{-1}t'_{\mathcal{T}}(0)=\sum_{i=1}^{L-1}m_{i,i+1}+\mathcal{T}_{L}^{-1}(0)m_{L,1}\mathcal{T}_{L}(0)+\mathcal{T}_{L}(0)^{-1}\mathcal{T}'_{L}(0)\, ,
\end{equation}
In our case, we simply have to take
$$\mathcal{T}(z)=f^{-1}=\sum_{\tau\in\llbracket 0,N-1\rrbracket}|f^{-1}(\tau)\rangle\langle \tau|\,,$$ where again we have identified the bijection $f^{-1}$ with the corresponding linear operator. We have that $\mathcal{T}(z)$ is indeed an integrable twist since it is invertible and equation \eqref{eq:int_twist} is satisfied because the twist is independent on $ z $ and because of the form of $ R(z) $ in equation \eqref{eq:R_twist}.
We have by identifying equation \eqref{eq:Mf_model} and equation \eqref{eq:M_twist} that
\begin{equation}\label{eq:m_L1}
	\mathcal{T}_{L}^{-1}(0)m_{L,1}\mathcal{T}_{L}(0)=f_Lm_{L,1}f_L^{-1}=m_{L,1}^{f}\, .
\end{equation}
Therefore, the expression (\ref{eq:M_twist}) produces in this case the Markov matrix \eqref{eq:Mf_model} of our model and this shows the integrability of the model.

\subsection{Correspondence with set-theoretical models}\label{ssec:corr_set}

For this subsection, we denote by $\tilde{M}_g$ the Markov matrix \eqref{eq:M_Lyub} from Section \ref{sec_Msettheo} constructed from the set-theoretical solution of the Yang--Baxter equation corresponding to the bijection $g$. 

Here we show that $\tilde{M}_g$ is conjugated to a Markov matrix $M_f$ in \eqref{eq:Mf_model} of the twisted SSEP, and moreover that this conjugation is simply done by a bijection on the configurations.

Recall that the set of configurations is
\[\mathfrak{C}=\{\vec{\tau}=(\tau_1,\dots,\tau_L)\ ,\ \ \tau_i\in\llbracket 0,N-1\rrbracket\}\, .\]
In the following proposition, we make again the identification of a bijection of $\mathfrak{C}$ with the linear operator acting on the vector space of the model. \begin{proposition}\label{prop:corr_set_twist}
Let $f=g^L$. We have:
\begin{equation}\label{eq_conjugationMarkov}
	\tilde{M}_g=V^{-1}M_fV\, ,
\end{equation}
where $V$ is the operator acting on the $L$ sites of the lattice as 
\begin{equation}\label{eq:defV}
    V=\mathrm{id}\otimes g \otimes g^2 \otimes\dots\otimes g^{L-1}\,.
\end{equation}
\end{proposition}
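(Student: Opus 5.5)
Both sides are sums of local terms, and $V$ is a tensor product of one-site permutations. So the identity can be checked term by term, by showing that conjugation by $V$ sends each local jump operator of $M_f$ to the corresponding local jump operator of $\tilde{M}_g$. Since $\mathrm{Id}$ is invariant under conjugation, it suffices to treat the permutation parts, namely the flip $P_{i,i+1}$ for the bulk bonds and the operator $f_L P_{L,1} f_L^{-1}$ for the boundary bond. Moreover, $V^{-1} X_{i,i+1} V$ only involves the factors $g^{i-1}$ and $g^{i}$ of $V$ on sites $i$ and $i+1$, because all other factors commute with $X_{i,i+1}$ and cancel. Everything therefore reduces to two-site computations on basis vectors.

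\textbf{Bulk bonds.} For $1\le i\le L-1$, write $a=g^{i-1}$ and $b=g^{i}=g\,a$. I compute
\[
(a^{-1}\otimes b^{-1})\,P\,(a\otimes b)\,|x,y\rangle=(a^{-1}\otimes b^{-1})\,|b(y),a(x)\rangle=|a^{-1}b(y),\,b^{-1}a(x)\rangle=|g(y),g^{-1}(x)\rangle .
\]
This is exactly $\check r|x,y\rangle$ from \eqref{eq:r_lyub1}, so $V^{-1}m_{i,i+1}V=\check r_{i,i+1}-\mathrm{Id}$. The order of the factors matters here: the left factor must be $g^{i-1}$ and the right factor $g^{i}$, which is the case for \eqref{eq:defV}.

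\textbf{Boundary bond.} On sites $(L,1)$ the relevant factors of $V$ are $g^{L-1}$ on site $L$ and $\mathrm{id}$ on site $1$. The twisted operator acts by $|x,y\rangle\mapsto|f(y),f^{-1}(x)\rangle$, where the first entry is on site $L$ and the second on site $1$. Conjugating gives
\[
|x,y\rangle\ \mapsto\ |g^{-(L-1)}f(y),\,f^{-1}g^{L-1}(x)\rangle .
\]
With $f=g^L$ this becomes $|g(y),g^{-1}(x)\rangle$, which is $\check r_{L,1}$ acting on the ordered pair (site $L$, site $1$), as in \eqref{eq:M_Lyub}. Summing the bulk and boundary contributions yields \eqref{eq_conjugationMarkov}.

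\textbf{Main obstacle.} The only delicate point is bookkeeping: keeping track of the ordering convention of the two tensor factors for the bond $(L,1)$, and of whether $V$ or $V^{-1}$ acts first. The check that $f=g^L$, rather than $g^{-L}$, is the right twist hinges on this. I would fix the convention by checking on a single basis vector $|x,y\rangle$ as above, using $|\tau_L,\tau_1\rangle$ ordered as in \eqref{eq:local_mtw}.
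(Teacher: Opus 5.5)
Your proof is correct and is essentially the paper's argument. The paper also conjugates term by term: it uses $\check r=F^{-1}PF$ with $F=\mathrm{id}\otimes g$ (your bulk computation with $a=g^{i-1}$, $b=g^{i}$) to get $V\check r_{i,i+1}V^{-1}=P_{i,i+1}$, and on the boundary bond $V\check r_{L,1}V^{-1}=g_L^{L}P_{1,L}\,g_L^{-L}$, which is your boundary check read in the opposite direction. Your explicit basis-vector verification of the $(L,1)$ ordering convention, showing that $f=g^L$ rather than $g^{-L}$ is the correct twist, spells out a step the paper leaves implicit.
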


\begin{proof}
It is well known (see for example \cite{Doikou_2021} for a more general statement) and easy to check that the set-theoretical solution
\begin{equation}\check r=\sum_{i,j\in\llbracket 0,N-1\rrbracket}|g(j),g^{-1}(i)\rangle\langle i,j|\, , \end{equation}
is conjugated to the permutation operator $P$ as follows
\begin{equation}\label{eq:corr_loc}
\check r=F^{-1} PF=G^{-1} P G\,,
\end{equation}
where the operators $ F,G$ are defined by
\begin{equation}\label{eq:local_corr_gene}
	F=\sum_{i,j\in\llbracket 0,N-1\rrbracket}|i,g(j)\rangle\langle i,j|\ \ \ \ \ \text{and}\ \ \ \ \  G=\sum_{i,j\in\llbracket 0,N-1\rrbracket}|g^{-1}(i),j\rangle\langle i,j|\ .
\end{equation}
In addition to the operator $V$ introduced in \eqref{eq:defV}, we consider $U$, also  acting on the full lattice with $ L $ sites, and defined by
\begin{equation}\label{eq:defU}
    U=(g^{-1})^{L-1}\otimes (g^{-1})^{L-2} \otimes \dots\otimes (g^{-1}) \otimes \mathrm{id}\,.
\end{equation}
Using equation \eqref{eq:corr_loc} relating $ \rc $ and $ P$, we have
\begin{equation}\label{eq:corres}
	U\rc_{i,i+1}U^{-1}=V\rc_{i,i+1}V^{-1}=
	\begin{cases}
P_{i,i+1} &  \forall i\in \llbracket 1,L-1 \rrbracket\,,\\
 g_L^{L}\, P_{1,L}\,  g_L^{-L} \quad&  \mathrm{for}\ i=L\,,\ i+1=1\ ,
	\end{cases}			
	\end{equation}
where we recall that $g_L=\mathrm{id}^{\otimes (L-1)}\otimes g$.
The formula above shows immediately that we have
\begin{equation}\label{eq_conjugationMarkovproof}
	\tilde{M}_g=U^{-1}M_fU=V^{-1}M_fV\ \ \ \ \text{when $f=g^L$,}
\end{equation}
and this concludes the proof.
\end{proof}

\begin{example}
Let $g=(1^{(0)},1^{(1)})$ with the representation in the basis  $ \{ \lvert 1^{(0)} \rangle, \lvert 1^{(1)}\rangle \} $ 
\begin{equation*}
    g=\begin{pmatrix}
        \cdot & 1 \\
        1 & \cdot
    \end{pmatrix}=g^{-1}\,.
\end{equation*}
For a lattice with 2 sites, the bijections $V$ and $U$ are given by
\begin{equation*}
V=\mathrm{id}\otimes g =
\begin{pmatrix}
    \cdot & 1& \cdot & \cdot\\
    1 & \cdot& \cdot & \cdot\\
    \cdot & \cdot& \cdot & 1\\
    \cdot & \cdot& 1 & \cdot
\end{pmatrix}\,,\qquad U=g^{-1}\otimes \mathrm{id}=\begin{pmatrix}
    \cdot & \cdot& 1 & \cdot\\
    \cdot & \cdot& \cdot & 1\\
    1 & \cdot& \cdot & \cdot \\
    \cdot & 1& \cdot & \cdot
\end{pmatrix}\,.
\end{equation*}
\end{example}

An important feature of the operator $V$ conjugating the two Markov matrices in Formula \eqref{eq_conjugationMarkov} is 
that it is induced by a bijection of $\mathfrak{C}$ of the form  $\pi^{(1)}\pi^{(2)}\dots \pi^{(L)}\,,$
where each $\pi^{(i)}$ permutes the particles species at site $i$. Therefore, we conclude that the two Markov models given respectively by $\tilde{M}_g$ and $M_f$ (with $f=g^L$) are mathematically equivalent since an explicit bijection on the configuration space sends one into another. In other words, any mathematical property of one model is reflected to an equivalent property of the other model; for example, the number of sectors, their cardinality, etc. (see the next section).

\begin{remark}
One has to be careful that the bijection $V$ corresponds to local relabelings at each site $i$ of the lattice \emph{which depend on the site $i$}. This is not a simple relabeling of the set $\llbracket0,N-1\rrbracket$. Nevertheless, depending on the physical interpretation of the model, one may use the point of view corresponding to $\tilde{M}_g$ or the one corresponding to $M_f$.
\end{remark}

\begin{remark}
Notice that the existence of a permutation $g$ satisfying $ g^{L}=f $ for a given $ f $ is not guaranteed. A sufficient and necessary condition for the existence of an $ L $-th root of $ f $ is given in \cite{pouyanne2001}. Hence the set-theoretical models of the previous section are equivalent to a subclass (but not all) of the twisted SSEP models defined in this section.
\end{remark}

\section{Sectors and stationary states}\label{sec:comm_class_statio}

Having shown in the preceding section that the models from Section \ref{sec_Msettheo} constructed from set-theoretical solutions of the Yang--Baxter equation are mathematically equivalent to some twisted SSEP models from Section \ref{sec_MtwistedSSEP}, we proceed to the study of these twisted SSEP models. We will see that there are several stationary states which are all given by a uniform probability on sectors (irreducible subsets for $M_f$ of the configuration space). Therefore, we are reduced to the combinatorial study of these sectors.

In this section, it will be convenient to fix, as before, the bijection $f$ in the form (cycles decomposition)
\[f=(1^{(0)},\dots,1^{(c_1-1)})(2^{(0)},\dots,2^{(c_2-1)})\dots\dots(n^{(0)},\dots,n^{(c_n-1)})\ ,\]
and recall that the corresponding Markov matrix is denoted $M_f$.

We remind that the particles are denoted $s^{(e)}$ where $s$ is the species and $e$   its charge, with $1\leq s\leq n$ and $0\leq e\leq c_s-1$. The non-zero transition rates are given by
\begin{equation}
    \begin{split}
        & (s^{(e)}\,,\,t^{(e')}) \ \substack{1\\ \longleftrightarrow}\ (t^{(e')}\,,\,s^{(e)})\quad \mbox{for sites $(i,i+1)$ with $1\leq i<L$}\,,\\
        & (s^{(e)}\,,\,t^{(e')}) \ \substack{1\\ \longleftrightarrow}\ (t^{(e'+1)}\,,\,s^{(e-1)})\quad \mbox{between sites $L$ and 1}\,,
    \end{split}
\end{equation}
where the charges for the species $s$ are understood modulo $c_s$.

\subsection{Stationary states}\label{ssec:statio}
A stationary state $\lvert S \rangle=\sum_{\vec \tau\in \llbracket 0,N-1 \rrbracket^{L}} S(\vec \tau)\lvert \vec \tau\rangle $ is a probability vector 
 (the coefficients sum to $1$) which is invariant under the dynamics of the process, that is to say, it satisfies
$$ M_f\lvert S \rangle=0\ .$$
The number of stationary states is equal to the number of  sectors of the Markov process, which are defined as follows

\begin{definition}[sector]
For a Markov process defined on the configuration space $ \mathfrak{C} $ with transitions given by the Markov matrix $ M $, the sector of a configuration $ \vec \tau\in \mathfrak{C} $ corresponds to the equivalence class of $ \vec \tau $ for the following equivalence relation :
\begin{equation}
	\vec \tau \sim \vec \tau' \Leftrightarrow \exists k,l\in \mathbb{N},\ \langle \vec\tau \rvert M^{k} \lvert \vec\tau' \rangle\neq 0\ \mathrm{and}\ \langle \vec\tau' \rvert M^{l} \lvert \vec\tau \rangle\neq 0\, .
\end{equation}

\end{definition}

For all the processes considered in this article, the Markov matrices are symmetric so if there exists a $ k\in \mathbb{N} $ such that $ \langle \vec\tau \rvert M^{k} \lvert \vec\tau' \rangle\neq 0 $, then $ \langle \vec\tau' \rvert M^{k} \lvert \vec\tau \rangle\neq 0 $ and vice versa.

\begin{remark}\label{rem:group}
In the case of the twisted SSEP, it is easy to see from the definition of $M_f$ that the sector of a configuration $ \vec \tau $ is equivalent to the orbit of the group generated by the following two operations :
\begin{itemize}
\item Permuting the entries of $\vec{\tau}$;
\item the operation $f_{1,L}$ which increases the first charge by 1 and decreases the last charge by 1.
\end{itemize}
For the usual periodic SSEP, this group is the symmetric group. For the twisted periodic SSEP, it includes the symmetric group so the orbits are larger.

\end{remark}

\begin{remark}
If a Markov process has a single sector, the process is said to be irreducible.
This is not the case of the twisted periodic SSEP.
\end{remark}

We denote the set of sectors by
\begin{equation}
    \mathcal{S}=\{\text{sectors}\}=\{C_{\gamma}\,,\ \gamma\in\Gamma\}\, ,
\end{equation}
where $\Gamma$ is an indexing set for the sectors.

For any sector $ C_{\gamma} $, there exists a unique stationary state denoted $ \lvert S_{\gamma} \rangle $. This is proven in \cite[p.111,p118]{Norris1998} where the process is assumed to be positive recurrent on each class, a notion that we do not discuss further but which is true since the number of configurations in a class is finite \cite[p.27]{Norris1998}.
\begin{proposition}\label{prop:equiproba}
In each sector $C_{\gamma}$, the probability distribution of the stationary state is uniform, that is,
\begin{equation}\label{Sgamma}
	|S_{\gamma}\rangle=\frac{1}{\lvert C_{\gamma} \rvert}\sum_{\vec{\tau}\in C_\gamma}|\vec{\tau}\rangle\ .
\end{equation}
\end{proposition}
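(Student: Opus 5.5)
The approach is to use the symmetry of $M_f$ together with the zero column sums. Since $M_f$ is symmetric (noted in Section \ref{ssec:def_tw_SSEP}) and is a Markov matrix, each column sums to zero, so each row does too. Consequently the restriction of $M_f$ to any sector annihilates the constant vector on that sector. The only remaining work is to check that this restriction makes sense and to appeal to uniqueness.

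The first step is to show that $M_f$ preserves the span of $\{|\vec\tau\rangle,\ \vec\tau\in C_\gamma\}$. If $\langle\vec\tau'|M_f|\vec\tau\rangle\neq 0$ with $\vec\tau\in C_\gamma$, then by symmetry $\langle\vec\tau|M_f|\vec\tau'\rangle\neq0$ as well. Hence $\vec\tau'\sim\vec\tau$ by the definition of sector (take $k=l=1$, or $k=l=0$ when $\vec\tau'=\vec\tau$), so $\vec\tau'\in C_\gamma$. Thus no transition leaves a sector, and $M_f$ is block diagonal along the sector decomposition. Equivalently, one can invoke Remark \ref{rem:group}: sectors are orbits of a group, and each transition of $M_f$ applies one of the group generators.

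The second step is to compute $M_f|S_\gamma\rangle$. For $\vec\tau'\in C_\gamma$,
\[
\langle\vec\tau'|M_f\sum_{\vec\tau\in C_\gamma}|\vec\tau\rangle=\sum_{\vec\tau\in C_\gamma}m(\vec\tau\to\vec\tau')=\sum_{\vec\tau\in\mathfrak C}m(\vec\tau'\to\vec\tau)=0 .
\]
Here the first sum may be extended to all of $\mathfrak C$ by the first step, since the terms outside $C_\gamma$ vanish. Symmetry $m(\vec\tau\to\vec\tau')=m(\vec\tau'\to\vec\tau)$ turns the row sum into a column sum, which is zero by \eqref{coeffdiagM}. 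For $\vec\tau'\notin C_\gamma$ every term vanishes by the first step. Hence $M_f|S_\gamma\rangle=0$. The vector $|S_\gamma\rangle$ has nonnegative entries summing to $1$ and is supported on $C_\gamma$, so it is a stationary state supported on that sector.

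The final step applies the uniqueness result quoted from \cite{Norris1998}: each sector carries a unique stationary state. That state must therefore be the uniform one, which gives \eqref{Sgamma}.

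The only mildly delicate point is the first step, namely that sectors are closed classes so that $M_f$ restricts to them. Symmetry handles this immediately, because for symmetric matrices communicating classes coincide with the connected components of the transition graph. Beyond that, the argument is essentially the standard fact that a doubly stochastic generator has the uniform measure as its stationary state.
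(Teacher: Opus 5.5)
Your proposal is correct and follows essentially the same route as the paper. The paper also takes uniqueness of the stationary state on each sector from Norris, and uses symmetry of $M_f$ to turn the zero column sums \eqref{coeffdiagM} into zero row sums, which gives $M_f|S_\gamma\rangle=0$. Your explicit check that no transition leaves a sector, which justifies extending the sum over $C_\gamma$ to all of $\mathfrak{C}$, is a step the paper's one-line computation uses only implicitly.
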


\begin{proof}   
We already know that the stationary state is unique for a given sector. Since (\ref{Sgamma}) is already normalised correctly (the coefficients sum to $1$), we only need to show that $M_f|S_{\gamma}\rangle=0$.

Recall that $M_f$ is symmetric and denote its matrix coefficients between two configurations by $m(\vec \tau' \rightarrow \vec \tau)$. We have
\begin{equation}
M_f\sum_{\vec{\tau}\in C_\gamma}|\vec{\tau}\rangle = \sum_{\vec \tau'\in C_{\gamma}}m(\vec \tau' \rightarrow \vec \tau) |\vec \tau'\rangle
=\sum_{\substack{\vec \tau'\in C_{\gamma}\\\vec \tau'\neq \vec \tau}}\left(m(\vec \tau' \rightarrow \vec \tau) -m(\vec \tau \rightarrow \vec \tau')\right)|\vec\tau\rangle=0\,,
\end{equation}
where we used Formula (\ref{coeffdiagM}) for the diagonal matrix coefficients of $M_f$.  
\end{proof}

\paragraph{The case of the usual $ N $-species SSEP.}
The usual $N$-species SSEP corresponds to the case $f=\text{Id}$. It is not irreducible and its sectors correspond to configurations with a fixed number of particles of each species. It is in bijection with the number of multisets of the size of the lattice $ L $ with elements taken in the set $\llbracket 0,N-1\rrbracket$. So the number of sectors is 
$$ |\mathcal{S}|=\begin{pmatrix}
	N-1+L\\
	N-1
\end{pmatrix}\ .$$
The cardinality of the sector $C_\gamma$ corresponding to $p_{i} $ particles of species $ i $ for $ i\in \llbracket 0,N-1 \rrbracket $ is the multinomial coefficient 
$$|C_{\gamma}|=\begin{pmatrix}
	L\\
	p_{0},...,p_{N-1}
\end{pmatrix}\ .$$
We are going to generalise these formulas to the general case (arbitrary $f$). Note that it is clear that the presence of the twist will reduce the number of sectors (while increasing their sizes), because the twist induces species transformations. Therefore, two different sectors of the usual multi-species SSEP can belong to the same sector of a twisted SSEP.

\begin{remark}
For any configuration $ \vec\tau,\vec\tau' $ of the twisted SSEP, the microscopic probability current $ j_{\vec\tau \rightarrow \vec\tau'}=m(\vec\tau \rightarrow \vec\tau')S(\vec\tau) - m(\vec\tau' \rightarrow \vec\tau)S(\vec\tau') $ vanishes for any stationary state $S$ and the system is at thermodynamic equilibrium as in the usual $ N $-species SSEP.
\end{remark}

\subsection{Invariants and sectors}\label{subsec_invcomm}

\subsubsection{Labelling sectors}\label{sssec:lab_comm}

Recall that a configuration is denoted by
\[\vec{\tau}=(s_1^{(e_1)},\dots,s_L^{(e_L)})\ \ \ \ \text{with}\ s_i\in\llbracket 1,n \rrbracket\ \text{and}\ e_i\in\llbracket 0,c_{s_i}-1 \rrbracket\, ,\]
where $s_i$ corresponds to the species and $e_i$ to the charge of the particle laying at site $i$.
\begin{definition}[Profile]\label{def:profile}
    The profile of a configuration $\vec{\tau}$ is defined by $p(\vec{\tau})=(p_1,\dots,p_n)$ where 
\[p_k=|\{i\in\{1,\dots,L\}\ \text{such that}\ s_i=k\}|\ \ \ \ \ \ \text{for}\ k=1,\dots,n\, .\]
In words, the value $p_k$ records how many times the species $k$ appears in $\vec{\tau}$.
\end{definition}

For a configuration $\vec\tau$, the species which appear at least once are called the \emph{$\vec\tau$-relevant species}. They depend only on the profile $p(\vec{\tau})$ and are the ones for which $p_k\neq 0$.
\begin{definition}[Total charge]\label{def:sum}
The total charge of a configuration $\vec{\tau}$ is defined by
\begin{equation}\label{sum}
    	E(\vec \tau)=\left(\sum_{i=1}^Le_i\right)mod\,D_{\vec\tau}\,,\ \ \ \text{where}\ D_{\vec\tau}=gcd(c_k\in K_{\vec \tau})\, ,
    \end{equation}
with $K_{\vec\tau}\subset \llbracket 1,n\rrbracket$ the set of $\vec\tau$-relevant species.
\end{definition}
Note that for a given position $i$, if in (\ref{sum}) we replace $e_i$ by, say, $e_i+c_{s_i}$, this does not change the value of the total charge, since by construction $D_{\vec\tau}$ divides each $c_{s_i}$ of $\vec\tau$-relevant species $s_i$. In this sense, the total charge is consistent with the definition of the charge of a species.

\begin{remark}
For $ K_{\vec\tau} $ the set of $\vec\tau$-relevant species, the total charge of this configuration is defined modulo $ D_{\vec \tau}=\mathrm{gcd}(c_{k}, k\in K_{\vec\tau}) $ which depends in general on $ \vec \tau $.
When the bijection $ f $ of the model $ M_{f} $ has all its cycles of the same size $ c $, then $ D_{\vec \tau}=c $ and it is independent on the configuration.
\end{remark}

We are ready to characterise the sectors.
\begin{proposition}\label{prop:lab_comm}
A sector is uniquely determined by a profile and a total charge. Namely, $\vec{\tau}$ and $\vec{\tau}'$ are in the same sector if and only if $p(\vec{\tau})=p(\vec{\tau}')$ and $E(\vec{\tau})=E(\vec{\tau}')$.
\end{proposition}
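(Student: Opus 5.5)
We use Remark~\ref{rem:group}: the sector of $\vec\tau$ is its orbit under the group generated by the permutations of the $L$ entries and by the move $f_{1,L}$, which raises the charge at site $1$ by one and lowers the charge at site $L$ by one. Both implications then become statements about this group action.

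\emph{Necessity.} We check that $p$ and $E$ are constant on orbits. A permutation of entries leaves the multiset of pairs $(s_i,e_i)$ unchanged, so it preserves both the profile and $\sum_i e_i$. The move $f_{1,L}$ never changes species, so it preserves the profile, and with it $K_{\vec\tau}$ and $D_{\vec\tau}$. Its effect on $\sum_i e_i$ is $+1-1=0$ as an integer, once the charges are taken as representatives. Reducing a charge modulo $c_{s_i}$ only changes $\sum_i e_i$ by a multiple of $D_{\vec\tau}$, as noted after Definition~\ref{def:sum}. Hence $E$ is invariant.

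\emph{Sufficiency.} Suppose $p(\vec\tau)=p(\vec\tau')$ and $E(\vec\tau)=E(\vec\tau')$. We transform both configurations into a common normal form in three steps.

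\begin{enumerate}
\item Define the elementary move $T_{i,j}$ for $i\neq j$: add $1$ to the charge at site $i$ and subtract $1$ at site $j$. It equals $\sigma^{-1} f_{1,L}\sigma$, where $\sigma$ is a permutation bringing sites $i,j$ to positions $1,L$. So $T_{i,j}$ lies in the group, and the pair $(i,j)$ is arbitrary.
\item Use permutations to sort the configuration by species, in a fixed order of sites according to the common profile. This is possible since the profiles agree. The claim then reduces to showing that the charge vectors $(e_1,\dots,e_L)$ with fixed species at each site are connected exactly when their sums agree modulo $D$.
\item Use the $T_{i,j}$ to move all charge to one distinguished site $i_0$. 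We may take $i_0$ to be the site of a particle of the first relevant species. Repeated use of $T_{i_0,j}$ sets $e_j=0$ for every $j\neq i_0$.
\end{enumerate}

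It remains to show that the charge at site $i_0$ can be brought to any value congruent to the total charge modulo $D$. Write $c_{s_{i_0}}=c$.

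\begin{itemize}
\item For any site $j$ holding species $k$, applying $T_{i_0,j}$ exactly $c_k$ times shifts the charge at $i_0$ by $c_k$, taken modulo $c$, while returning $e_j$ to $0$.
\item The species occupying sites $j\neq i_0$ together with $s_{i_0}$ itself cover all relevant species. Moreover $c$ is already $0$ modulo $c$.
\item The achievable shifts at $i_0$ therefore form the subgroup of $\mathbb{Z}/c\mathbb{Z}$ generated by the $c_k$ with $k\in K_{\vec\tau}$. By Bézout this subgroup is $D\mathbb{Z}/c\mathbb{Z}$.
\end{itemize}

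Consequently the normal form is $e_{i_0}\in\{0,\dots,D-1\}$ equal to $E(\vec\tau)$, with all other charges zero. Two configurations with equal profile and total charge have the same normal form, hence lie in the same sector.

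A degenerate case must be handled separately: when site $i_0$ carries the only particle of a relevant species, the moves with partner sites $j$ still generate all the $c_k$ of the other relevant species, so the Bézout argument goes through. The case $L=1$ is trivial.

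I expect the main obstacle to be the Bézout step: one must argue cleanly that shifts by $c_k$ are available at $i_0$ for every relevant $k$ simultaneously without disturbing the other normalized charges. Everything else is bookkeeping.
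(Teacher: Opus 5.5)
Your proof is correct and follows essentially the same route as the paper. You show that the profile and total charge are invariant under the generating moves, then use permutations to align species, the conjugated moves $f_{i,j}$ to normalize all charges but one, and a B\'ezout argument to adjust the remaining charge by $D_{\vec\tau}$. In that last step, applying a move with a partner site of species $k$ exactly $c_k$ times leaves the partner's charge unchanged. The only difference is cosmetic: you reduce both configurations to a common normal form with all charge at one site, while the paper transforms $\vec\tau'$ directly into $\vec\tau$ with site $L$ as the distinguished site. Your ``degenerate case'' needs no separate treatment, since $c_{s_{i_0}}\equiv 0$ modulo $c$ already covers it.
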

\begin{proof}

Recalling Remark \ref{rem:group}, the sector of a given $\vec{\tau}$ consists of all elements $\vec{\tau}'$ obtained by repeated applications of the permutation swapping the entries of $\vec \tau$ and of the permutation $f_{1,L}$ which increases the first charge by 1 and decreases the last charge by 1. Combining these two operations, one obtains also for any $i,j$ the operation $f_{i,j}$, which adds $1$ from the charge at position $i$ and subtracts $1$ to the charge at position $j$.

These operations clearly do not change the profile. For the total charge, let us consider the operation $f_{1,L}$. Then the charge $e_1$ is changed into $e_1-1$ or $e_1-1+c_{s_1}$ (if $e_1=0$) while the charge $e_L$ becomes $e_L+1$ or $e_L+1-c_{s_L}$ (if $e_L=c_{s_L}$). In any case, since $D_{\vec\tau}$ divides $c_{s_1}$ and $c_{s_L}$ the sum remains the same.

Now assume that $p(\vec{\tau})=p(\vec{\tau}')$ and $E(\vec{\tau})=E(\vec{\tau}')$. We need to show that $\vec{\tau}'$ is obtained from $\vec{\tau}$ by a succession of the above operations. Denote $\vec{\tau}=(s_1^{(e_1)},\dots,s_L^{(e_L)})$ as before. Since the profiles are the same, with permutations we can bring $\vec{\tau}'$ to be of the form
\[\vec{\tau}'=(s_1^{(e'_1)},\dots,s_L^{(e'_L)})\ .\]
Then using operations of the form $f_{i,L}$ for $i=1,\dots,L-1$, we can bring $\vec{\tau}'$ to be of the form
\[\vec{\tau}'=(s_1^{(e_1)},\dots,s_{L-1}^{(e_{L-1})},s_L^{(e'_L)})\ .\]
From the assumption that the total charges are equal, we have that $e'_L=e_L\,mod\,D_{\vec\tau}$. Hence, to conclude the proof, it is enough to show that we can change the value of $e'_L$ by $D_{\vec\tau}$.

Let $K_{\vec\tau}\subset\{1,\dots,n\}$ be the set of $\vec\tau$-relevant species. Recall that $D_{\vec\tau}=gcd(c_k\,,\ k\in K_{\vec\tau})$. Therefore, there exist integers $m_k\in\mathbb{Z}$ (Bezout identity) such that
\[\sum_{k\in K_{\vec\tau}}m_kc_k=D_{\vec\tau}\ .\]
Now for each $k\in K_{\vec\tau}$ such that $k\neq s_L$, we choose a position $i\in\{1,\dots,L-1\}$ such that $s_i=k$ and we apply the operation $f_{i,L}$ to the power $m_{s_i}c_{s_i}$. We see that the charge at position $i$ remains the same since we are adding a multiple of $c_{s_i}$. Having done that for every $k\in K_{\vec\tau}$ with $k\neq s_L$, we see that the charge at position $L$ has changed by
\[\sum_{k\in K_{\vec\tau}\backslash\{s_L\}}m_kc_k=D_{\vec\tau}-m_{s_L}c_{s_L}\ .\]
This is equivalent to changing the charge at position $L$ by $D_{\vec\tau}$ since this charge is considered modulo $c_{s_L}$.
\end{proof}

\subsection{Number of sectors}\label{ssec:comm_nb}

In Section \ref{ssec:statio}, we recalled the number of sectors for the usual $N$-species SSEP. Using the labelling of the sectors of the  section \ref{sssec:lab_comm}, we can now count them and generalise the formula giving their number for an arbitrary permutation $ f $.

\begin{proposition}\label{prop:comm_nb_gene}
The number of sectors $ \lvert \mathcal{S} \rvert $ of the twisted SSEP on a lattice with $ L\geq2 $ sites and with permutation $ f=(1^{(0)},\dots,1^{(c_1-1)})(2^{(0)},\dots,2^{(c_2-1)})\dots\dots(n^{(0)},\dots,n^{(c_n-1)}) $ is
\begin{equation}\label{eq:comm_nb_gene}
 \vert \mathcal{S}\rvert=\sum_{\emptyset\neq X\subset \llbracket 1,n \rrbracket}
 \begin{pmatrix}
 	L-1\\
	\lvert X \rvert-1
\end{pmatrix}
\mathrm{gcd}(c_{x},\ x\in X)\, .
\end{equation}
\end{proposition}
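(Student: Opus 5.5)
By Proposition \ref{prop:lab_comm}, the sectors are in bijection with the pairs $(p,E)$ that can actually occur. Here $p=(p_1,\dots,p_n)$ is a profile, i.e.\ a vector of non-negative integers with $p_1+\dots+p_n=L$. $E$ is a total charge, i.e.\ an element of $\mathbb{Z}/D\mathbb{Z}$ with $D=\mathrm{gcd}(c_k,\ k\in K)$, where $K=\{k\,:\,p_k\neq 0\}$ is the set of relevant species determined by $p$. So the plan is to count these pairs. First we group profiles according to their support $X=K$. Then, for each profile, we count the admissible values of the total charge.

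\emph{Step 1: every residue occurs.} Fix a profile $p$ with support $X\neq\emptyset$ and set $D_X=\mathrm{gcd}(c_x,\ x\in X)$. I will show that every residue class modulo $D_X$ is realised as $E(\vec\tau)$ for some configuration $\vec\tau$ of profile $p$. Pick any configuration of profile $p$ and any site $i$, with species $s_i$. Changing the charge $e_i$ over $\llbracket 0,c_{s_i}-1\rrbracket$ shifts $\sum_j e_j$ by $0,1,\dots,c_{s_i}-1$. Since $D_X$ divides $c_{s_i}$, we have $c_{s_i}\geq D_X$, so all residues modulo $D_X$ are attained. Combined with Proposition \ref{prop:lab_comm}, this shows that the profile $p$ contributes exactly $D_X$ sectors, and this number depends only on its support $X$.

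\emph{Step 2: counting profiles with given support.} The profiles with support exactly $X$ are the compositions of $L$ into $|X|$ strictly positive parts indexed by $X$. By stars and bars there are $\binom{L-1}{|X|-1}$ of them. This count is zero when $|X|>L$, which agrees with the formula since then $\binom{L-1}{|X|-1}=0$. The hypothesis $L\geq 2$ is not really needed for this count; it is only there so that the twist bond exists and Proposition \ref{prop:lab_comm} applies as stated.

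\emph{Step 3: summing.} Summing over the non-empty supports $X\subset\llbracket 1,n\rrbracket$ gives
\[|\mathcal{S}|=\sum_{\emptyset\neq X}\binom{L-1}{|X|-1}\,\mathrm{gcd}(c_x,\ x\in X),\]
which is \eqref{eq:comm_nb_gene}. As a sanity check, for $f=\mathrm{Id}$ we have $n=N$ and all $c_k=1$, so the sum is $\sum_j\binom{N}{j}\binom{L-1}{j-1}=\binom{N+L-1}{N-1}$ by Vandermonde's identity, recovering the SSEP count. The only point needing care is Step 1. Its content is that the total charge is an invariant with exactly $D_X$ values; the invariance itself was already established in Proposition \ref{prop:lab_comm}, so no real obstacle is expected.
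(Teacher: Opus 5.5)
Your proof is correct and takes the same route as the paper. It counts the pairs (profile, total charge) via Proposition \ref{prop:lab_comm}, groups profiles by their support $X$, counts them as compositions of $L$ into $|X|$ positive parts, and multiplies by $\mathrm{gcd}(c_x,\ x\in X)$. Your Step 1, which shows that every residue modulo $D_X$ is realised for each individual profile, makes explicit a point the paper only asserts ``by definition of the total charge.''
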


\begin{proof}
From Proposition \ref{prop:lab_comm}, the number of sectors $ \lvert\mathcal{S}\rvert $ is equal to the number of distinct pairs $ ( p(\vec \tau),E(\vec \tau)) $ for $ \vec \tau\in \llbracket 0,N-1 \rrbracket^{L} $.

For a given $ \emptyset \neq X\subset \llbracket 1,n \rrbracket $, we consider the configurations $\vec\tau$ such that $X$ is the set of $\vec \tau$-relevant species. By definition of the total charge, such configurations have $ \mathrm{gcd}( c_{x},\ x\in X) $ possible values for their total charge. The number of possible profiles is equal to the number of composition of $ L $ into $ \lvert X \rvert $ parts, which is $ 
\begin{pmatrix}
	L-1\\
	\lvert X\rvert-1
\end{pmatrix}$.
Thus there are $ 
\begin{pmatrix}
	L-1\\
	\lvert X\rvert-1
\end{pmatrix}\mathrm{gcd}( c_{x},\ x\in X) $ distinct pairs $ ( p(\vec \tau),E(\vec \tau)) $ for a given $ \emptyset \neq X\subset \llbracket 1,n \rrbracket $. The total number of sectors is obtained by summing over all nonempty subsets $ X\subset \llbracket 1,n \rrbracket $. 
\end{proof}

\begin{example}\label{ex:nb_com}
Here we apply formula \eqref{eq:comm_nb_gene} in the case where $ f $ has cycles of the same length, i.e., when all species have the same number of possible charges.

Let $ f=(1^{(0)},\dots,1^{(N-1)})^{J} $ where $ J\in \mathbb{N}$. The power $ J $ has the effect of splitting the full cycle into $ d $ cycles, each of length $ N/d $ where $ d=\mathrm{gcd}(N,J) $. In that case, formula \eqref{eq:comm_nb_gene} becomes

\begin{equation}
	\lvert\mathcal{S} \rvert=\sum_{\emptyset\neq X\subset \llbracket 1,d \rrbracket}
 \begin{pmatrix}
 	L-1\\
	\lvert X \rvert-1
\end{pmatrix}\frac{N}{d}=\frac{N}{d}\sum_{k=1}^{d}
\begin{pmatrix}
	d\\
	k
\end{pmatrix}						
 \begin{pmatrix}
 	L-1\\
	k-1
\end{pmatrix}=\frac{N}{d}\begin{pmatrix}			L+d-1\\						L
\end{pmatrix}\, ,
\end{equation}
where we grouped terms corresponding to sets $ X $ of size $ k $ in the second step and we used the Vandermonde's identity in the last one. 

The case $ f=\mathrm{Id} $ corresponds to $ J=0 $ so $ d=N $ and we recover the number of sectors of the multi-species periodic SSEP.

The case $ f=(1^{(0)},\dots,1^{(N-1)}) $ corresponds to $ J=1 $ so $ d=1 $ and there are $ N $ sectors corresponding to the $ N $ possible values of the total charge.
\end{example}

\begin{remark}\label{rem:increa}
When $ f $ has $d$ cycles of the same length $N/d$, the number of sectors seen as a function of $ d $, is strictly increasing. Indeed, for
\begin{equation}
	\mathcal{S}(d)=\frac{N}{d}\begin{pmatrix}			L+d-1\\						L
\end{pmatrix}\, ,
\end{equation}
and $ L\geq 2 $, we have that $ \frac{\mathcal{S}(d+1)}{\mathcal{S}(d)}=\frac{d+L}{d+1}>1 $. 
\end{remark}

\begin{remark}\label{rem:min_max}
The cases $ f=\mathrm{Id} $ and $ f=(1^{(0)},\dots,1^{(N-1)}) $ respectively maximise and minimise the number of sectors of the twisted SSEP. This is clear for $ f=\mathrm{Id} $ since a sector of the multi-species SSEP is always included in a (larger) sector of any twisted SSEP.

For $ f=(1^{(0)},\dots,1^{(c_1-1)})(2^{(0)},\dots,2^{(c_2-1)})\dots\dots(n^{(0)},\dots,n^{(c_n-1)}) $ we can rewrite Equation \eqref{eq:comm_nb_gene} as
\begin{align}
	\vert \mathcal{S}\rvert&=\sum_{k=1}^{n}
 \begin{pmatrix}
 	L-1\\
	1-1
\end{pmatrix}
c_{k}+\sum_{\substack{X\subset \llbracket 1,n \rrbracket\\ \lvert X\rvert>2}}
 \begin{pmatrix}
 	L-1\\
	\lvert X \rvert-1
\end{pmatrix}
\mathrm{gcd}(c_{x},\ x\in X)\\
			       &=N+\sum_{\substack{ X\subset \llbracket 1,n \rrbracket\\ \lvert X\rvert>2}}
 \begin{pmatrix}
 	L-1\\
	\lvert X \rvert-1
\end{pmatrix}
\mathrm{gcd}(c_{x},\ x\in X)\, .
\end{align}
The second term is positive so $ N $ is a lower bound for $ \lvert S \rvert $ which is reached for the full cycle $ f=(1^{(0)},\dots,1^{(N-1)})  $ so $ N $ is indeed the minimum of $ \lvert S\rvert $.
\end{remark}

\subsection{Cardinality of sectors}\label{ssec:comm_card}

We now determine the size of a sector and show that, for a fixed permutation $ f $, it only depends on the profile of the configurations of that sector. 
As in the previous section, we then apply the formula for a permutation with cycles of equal length.
\begin{proposition}\label{prop:comm_card_gene}
	
The cardinality of a sector $ C_{\gamma} $ with profile $ p=(p_{1},...,p_{n}) $ and total charge $E$ is 
\begin{equation}\label{eq:card_comm}
	\lvert C_{\gamma} \rvert = \left(\frac{L!}{p_{1}!...p_{n}!}\right)\frac{c_{1}^{p_{1}}... c_{n}^{p_{n}}}{\mathrm{gcd}( c_{x},\ x\in X  )}\, ,
\end{equation}
where $ X=\{k\in \llbracket 1,n \rrbracket\ \text{with}\ p_k\neq 0\}$ is the set of relevant species. In particular, the cardinality does not depend on the total charge $E$.
\end{proposition}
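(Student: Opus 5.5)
By Proposition \ref{prop:lab_comm}, a sector is exactly the set of configurations $\vec\tau$ with a prescribed profile $p$ and a prescribed total charge $E\in\mathbb{Z}/D\mathbb{Z}$, where $D=\mathrm{gcd}(c_x,\ x\in X)$. The plan is therefore purely combinatorial. We count all configurations with profile $p$, and then show that they are split into $D$ classes of equal size according to the value of $E$.

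\emph{Step 1: configurations with profile $p$.} The species positions are a word with $p_k$ letters $k$, giving $L!/(p_1!\cdots p_n!)$ choices. Independently, each site carrying species $s_i$ gets a charge $e_i\in\llbracket 0,c_{s_i}-1\rrbracket$, giving $\prod_k c_k^{p_k}$ choices. So there are
\[\frac{L!}{p_1!\cdots p_n!}\,c_1^{p_1}\cdots c_n^{p_n}\]
configurations with profile $p$.

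\emph{Step 2: equidistribution of the total charge.} Fix the species word $(s_1,\dots,s_L)$. The charge vector $(e_1,\dots,e_L)$ ranges over the finite abelian group $A=\prod_i \mathbb{Z}/c_{s_i}\mathbb{Z}$. The map $\phi:A\to\mathbb{Z}/D\mathbb{Z}$, $(e_i)\mapsto \sum_i e_i \bmod D$, is a well-defined group homomorphism because $D$ divides every $c_{s_i}$. It is surjective, since the element with $e_1=1$ and all other $e_i=0$ maps to $1$. Hence every fibre of $\phi$ is a coset of $\ker\phi$ and has cardinality $|A|/D$.

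\emph{Step 3: conclusion.} Summing over the $L!/(p_1!\cdots p_n!)$ species words, the number of configurations with profile $p$ and total charge $E$ is
\[\frac{L!}{p_1!\cdots p_n!}\cdot\frac{c_1^{p_1}\cdots c_n^{p_n}}{D}.\]
This gives \eqref{eq:card_comm} and shows that the count is independent of $E$.

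The only point requiring care is Step 2: checking that $\phi$ is well defined, which uses $D\mid c_{s_i}$ for every relevant species, and that it is surjective. Both are immediate. If one prefers to avoid group language, one can instead observe that shifting $e_1$ by $+1$ is a bijection from the fibre over $E$ to the fibre over $E+1$.
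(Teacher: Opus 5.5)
Your proof is correct and follows essentially the same route as the paper. Both count the $\frac{L!}{p_1!\cdots p_n!}\prod_k c_k^{p_k}$ configurations of profile $p$, then show that prescribing $E$ divides this count by exactly $D$. The paper handles that second step by noting that, once $e_1,\dots,e_{L-1}$ are chosen, $e_L$ is fixed modulo $D$ and so has $c_{s_L}/D$ admissible values; this is just a coordinate-wise version of your fibre count for the surjective homomorphism $\phi$.
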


\begin{proof}
From Proposition \ref{prop:lab_comm}, a sector $ C_{\gamma} $ is uniquely labeled by a pair $ (p,E) $ where $p=(p_1,\dots,p_n)$ is the profile and $E$ is the total charge taken modulo $D$ with $D=\mathrm{gcd}(c_{x},x\in X) $. In the following, we first count the number of configurations with profile $ p$. Then, among those configurations, we count the ones that have total charge $ E $.

Ignoring the charges, the number of ways of distributing $ p_{1},\dots,p_{n} $ species of particles on a lattice with $ L $ sites is $\frac{L!}{p_{1}!...p_{n}!}$. Now taking into account the charge of particles, once the particle species is fixed for all sites, there are $ p_{s} $ particle of species $ s $ that can take $ c_{s} $ possible charges, for each $s$. This implies that the number of configurations with profile $ p $ is 
\begin{equation}\label{eq:card_pro}
	\frac{L!}{p_{1}!...p_{n}!}\prod_{s=1}^{n} c_{s}^{p_{s}}\, . 
\end{equation}
Now for a configuration $\vec{\tau}=(s_1^{(e_1)},\dots,s_L^{(e_L)})$, having the toal charge $E$ gives the following constraints on the charge:
\[e_1+\dots +e_L=E\ \mathrm{mod}\ D\ .\]
This only restricts the value of $e_L$ to be $E-(e_1+\dots+e_{L-1})\ \mathrm{mod}\ D$. Say the species in site $L$ is $k$ then $e_L$ is taken modulo $c_k$. Since $D$ divides $c_k$ by construction, there are $\frac{c_k}{D}$ possible values for $e_L$ instead of $c_k$ when the total charge is not fixed. Comparing with (\ref{eq:card_pro}), this proves (\ref{eq:card_comm}).

\end{proof}

\begin{example}\label{ex:comm_card}
We can apply formula \eqref{eq:card_comm} in the case where $ f $ has $d$ cycles of the same length $N/d$. We have
\begin{equation}\label{eq:card-sector}
	\lvert C_{\gamma} \rvert = \left(\frac{L!}{p_{1}!...p_{d}!}\right)\left(\frac{N}{d}\right)^{L-1} \, .
\end{equation}
The case $ f=\mathrm{Id} $ corresponds to $ d=N $ and we recover the multinomial coefficient corresponding to $ \lvert C_{\gamma}\rvert $ for the multi-species periodic SSEP. The case $ f=(1^{(0)},\dots,1^{(N-1)}) $ corresponds to $ d=1 $ and $ \lvert C_{\gamma}\rvert=N^{L-1} $. The power $ L-1 $ in the formula tells that the charge for the last site is fixed by the total charge.
\end{example}

\section{Twist tuning and branching probabilities}\label{sec:twist_branch}

In this part, we analyse the behaviour of a stationary state under a modification of the twist. Namely, we choose a stationary state for some twisted SSEP and we consider its evolution in a twisted SSEP with a different twist. In other words, we study a quench between two twisted SSEP with different twists. For example, when one of the models is the usual (non-twisted) SSEP, this procedure can be seen as turning on/off a twist on a given bond of the periodic lattice.

\subsection{Overlaps and branching probabilities}\label{ssec:overlap_branch}

We consider two bijections $f^{(1)}$ and $f^{(2)}$ of $\llbracket 0,N-1\rrbracket$, and we fix a stationary state $|S^{(1)}\rangle$ corresponding to a sector $C^{(1)}$ for the twisted SSEP associated to $f^{(1)}$. We recall that
\begin{equation}\label{probvector}|S^{(1)}\rangle=\frac{1}{|C^{(1)}|}\sum_{\vec\tau\in C^{(1)}}|\vec\tau\rangle\ .\end{equation}
Now we consider $|S^{(1)}\rangle$ as the initial condition for the Markov model corresponding to $f^{(2)}$, we let it evolve using the Markov matrix $M_{f^{(2)}}$ and we look at the possible stationary states (for $M_{f^{(2)}}$)
that are reached at $t=\infty$.

We denote $\mathcal{S}^{(2)}$ the set of sectors for the model associated to $f^{(2)}$, and by $|S^{(2)}_\gamma\rangle$ the stationary state associated to $\gamma\in\mathcal{S}^{(2)}$.
\begin{definition}
For $\gamma\in\mathcal{S}^{(2)}$, we denote by $prob(C^{(1)}\to C_{\gamma}^{(2)})$ the probability that the system ends up in the stationary state $|S_{\gamma}^{(2)}\rangle$. We call it the \emph{branching probability} from $C^{(1)}$ to $C_{\gamma}^{(2)}$.
\end{definition}
At $t=\infty$, the state will be of the form:
\begin{equation}\label{dec_alphagamma}\sum_{\gamma\in\mathcal{S}^{(2)}}prob(C^{(1)}\to C_{\gamma}^{(2)})|S^{(2)}_\gamma\rangle\ . \end{equation}
Note that we have
\[\sum_{\gamma\in\mathcal{S}^{(2)}}prob(C^{(1)}\to C_{\gamma}^{(2)})=1\,,\]
since the stationary vectors are normalised (in the sense that the sum of their components is 1, see \eqref{probvector}), and the evolution of a probability vector under a Markov process remains a probability vector.

The branching probabilities are simply calculated in terms of $|S^{(1)}\rangle$ and $|S_{\gamma}^{(2)}\rangle$. Indeed, since the stationary states $|S_{\gamma}^{(2)}\rangle$ do not evolve (by definition) under $M_{f^{(2)}}$, the coefficients in the expansion (\ref{dec_alphagamma}) are the same as the corresponding coefficients in the initial state $|S^{(1)}\rangle$. Therefore we have
\[prob(C^{(1)}\to C_{\gamma}^{(2)})=\frac{\langle S^{(1)}|S_{\gamma}^{(2)}\rangle}{\langle S_{\gamma}^{(2)}|S_{\gamma}^{(2)}\rangle}\ .\]
So far, the discussion was valid in a general setting. Now we use the explicit formula (\ref{probvector}) expressing that for our models, the stationary states correspond to uniform probabilities on sectors. We have immediately
\[\langle S^{(1)}|S_{\gamma}^{(2)}\rangle=\frac{|C^{(1)}\cap C_\gamma^{(2)}|}{|C^{(1)}||C_\gamma^{(2)}|}\ .\]
Therefore, noticing that $\langle S_{\gamma}^{(2)}|S_{\gamma}^{(2)}\rangle=\frac{1}{|C^{(2)}_\gamma|}$, we conclude with the following result.
\begin{proposition}\label{prop_branching prob}
The branching probabilities are given by:
\[prob(C^{(1)}\to C_{\gamma}^{(2)})=\frac{|C^{(1)}\cap C_\gamma^{(2)}|}{|C^{(1)}|}\ .\]
\end{proposition}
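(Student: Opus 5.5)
The plan is to make the preceding derivation rigorous by identifying the long-time limit of $e^{tM_{f^{(2)}}}|S^{(1)}\rangle$ through conserved quantities. Since $M_{f^{(2)}}$ is symmetric and its off-diagonal entries vanish between configurations in different sectors (by the definition of sectors), it is block diagonal with respect to the partition $\mathcal{S}^{(2)}$. For each $\gamma\in\mathcal{S}^{(2)}$, let $\langle \mathbf{1}_\gamma| := \sum_{\vec\tau\in C^{(2)}_\gamma}\langle\vec\tau|$. Because the columns of $M_{f^{(2)}}$ sum to zero and each column has support inside a single sector, the columns indexed by $C^{(2)}_\gamma$ already sum to zero within $C^{(2)}_\gamma$. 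Hence $\langle \mathbf{1}_\gamma|M_{f^{(2)}}=0$, and the quantity $\langle \mathbf{1}_\gamma|P_t\rangle$ does not depend on $t$.

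Next I will describe the limit itself. On each block, the restriction of $M_{f^{(2)}}$ is a symmetric irreducible Markov generator on a finite set. Its spectrum is therefore real and nonpositive, and the eigenvalue $0$ is simple, with eigenvector the uniform vector $|S^{(2)}_\gamma\rangle$ of Proposition~\ref{prop:equiproba}. Consequently $e^{tM_{f^{(2)}}}$ converges to the orthogonal projection onto $\mathrm{span}\{|S^{(2)}_\gamma\rangle\}$. Symmetry is what makes this projection orthogonal, and it also sidesteps any periodicity issue, since we are in continuous time. The limit of $|P_t\rangle$ is thus $\sum_\gamma \alpha_\gamma|S^{(2)}_\gamma\rangle$ with $\alpha_\gamma=\langle S^{(2)}_\gamma|S^{(1)}\rangle/\langle S^{(2)}_\gamma|S^{(2)}_\gamma\rangle$. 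Alternatively, the conservation law gives $\alpha_\gamma=\langle \mathbf{1}_\gamma|S^{(1)}\rangle$ directly, because $\langle\mathbf{1}_\gamma|S^{(2)}_{\gamma'}\rangle=\delta_{\gamma\gamma'}$. By definition, $\alpha_\gamma$ is the branching probability.

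Finally I will compute $\langle \mathbf{1}_\gamma|S^{(1)}\rangle$ from the explicit form \eqref{probvector}. This is the sum over $\vec\tau\in C^{(1)}$ of $1/|C^{(1)}|$ restricted to those $\vec\tau$ lying in $C^{(2)}_\gamma$, which equals $|C^{(1)}\cap C^{(2)}_\gamma|/|C^{(1)}|$. As a consistency check, the overlap formula $\langle S^{(1)}|S^{(2)}_\gamma\rangle=|C^{(1)}\cap C^{(2)}_\gamma|/(|C^{(1)}||C^{(2)}_\gamma|)$ divided by $\langle S^{(2)}_\gamma|S^{(2)}_\gamma\rangle=1/|C^{(2)}_\gamma|$ gives the same answer.

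The only step needing real care is the convergence claim, namely that the limit exists and equals the projection onto the kernel. I would handle it through the spectral theorem for the symmetric matrix together with the simplicity of the zero eigenvalue on each irreducible block, which is the uniqueness of the stationary state per sector recalled before Proposition~\ref{prop:equiproba}. Everything else is bookkeeping.
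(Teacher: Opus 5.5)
Your proposal is correct and follows the same route as the paper: you identify the branching probability as the coefficient of $|S^{(2)}_\gamma\rangle$ in the long-time limit, extract it as $\langle S^{(1)}|S^{(2)}_\gamma\rangle/\langle S^{(2)}_\gamma|S^{(2)}_\gamma\rangle$, and evaluate the overlaps using the uniformity of the stationary states on sectors. The only difference is that you justify the step the paper states informally (that the coefficients are preserved because stationary states do not evolve), via the conserved quantities $\langle \mathbf{1}_\gamma|P_t\rangle$ and the spectral argument on each irreducible symmetric block, which adds rigor without changing the argument.
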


\subsection{Spreading, splitting and oscillations of sectors}\label{ssec:spread_split_osci}

\paragraph{Spreading of a sector.} Here we consider the particular case where the sector $C^{(1)}$ of the first model is included in a sector $C_{\gamma_0}^{(2)}$ of the second model, that is we have 
$$C^{(1)}\subset C_{\gamma_0}^{(2)}\,,\ \ \ \ \text{for some sector $\gamma_0\in\mathcal{S}^{(2)}$.}$$
In this case, the branching probabilities are trivial, in the sense that, according to Proposition \ref{prop_branching prob}, we have
\[prob(C^{(1)}\to C_{\gamma}^{(2)})=\delta_{\gamma,\gamma_0}\ .\]
The only thing to do is to determine, given $C^{(1)}$ and $C_{\gamma}^{(2)}$, whether $C^{(1)}\subset C_{\gamma}^{(2)}$ or $C^{(1)}\cap C_{\gamma}^{(2)}=\emptyset$. Examples are detailed in the next subsection.

\begin{remark}
Note that although the transition probability is trivial (0 or 1), the twist affects the dynamics of the twisted SSEP. Indeed, in the initial state, only the configurations $\vec\tau$ in $C^{(1)}$ have a non-zero probability $1/|C^{(1)}|$ while in the final state, more configurations $\vec\tau'\in C_{\gamma}^{(2)}$ have a non-zero (but smaller) probability
$1/|C_{\gamma}^{(2)}|$. 
\end{remark}

\paragraph{Splitting of a sector.} The opposite situation is when one reverses the role of $f^{(1)}$ and $f^{(2)}$: we start with the twist $f^{(2)}$ and we go on with the twist $f^{(1)}$. In this case, a sector $C^{(2)}$ for $f^{(2)}$ splits into a disjoint union of sectors for $f^{(1)}$:
\[C^{(2)}=C_{\gamma_1}^{(1)}\sqcup \dots \sqcup C_{\gamma_K}^{(1)}\ .\]
Now the branching probabilities are non-trivial and they are given, according to Proposition \ref{prop_branching prob} by
\[prob(C^{(2)}\to C_{\gamma}^{(1)})=\frac{|C_{\gamma}^{(1)}|}{|C^{(2)}|}\ \ \ \ \text{for $\gamma\in\{\gamma_1,\dots,\gamma_K\}$}\,,\]
and $0$ otherwise. Examples are given below.

\begin{remark}
The splitting phenomenon, where $C^{(2)}$ splits into a disjoint union of classes for $f^{(1)}$ appears systematically when $f^{(1)}$ is a power of $f^{(2)}$, namely if
\[f^{(1)}=(f^{(2)})^k\ \ \ \ \text{for some $k\geq 0$.}\]
Roughly speaking, the permutation $f^{(2)}$ permutes more things than the permutation $f^{(1)}$, therefore the sectors for $f^{(2)}$ are larger than the ones for $f^{(1)}$. 
\end{remark}

\paragraph{Oscillations of sectors.}
Assume we are in a spreading/splitting situation, namely we have a sector for $f^{(2)}$ which is a disjoint union of sector for $f^{(1)}$:
\[C^{(2)}=C_{\gamma_1}^{(1)}\sqcup \dots \sqcup C_{\gamma_K}^{(1)}\ .\]
Then we can start from a sector $C_{\gamma_{i_1}}^{(1)}$ then we alternate between turning on the twist $f^{(2)}$ and turning on the twist $f^{(1)}$. Then sector will alternatively spread and split, that is, schematically:
\[\begin{array}{ccccccccccccccc}
&&C^{(2)}& && &C^{(2)}& && &C^{(2)}& && &\\
&\nearrow &&\searrow &&\nearrow&&\searrow &&\nearrow &&\searrow &&\to &\\
C_{\gamma_{i_1}}^{(1)}& && &C_{\gamma_{i_2}}^{(1)}& && &C_{\gamma_{i_3}}^{(1)}& && &C_{\gamma_{i_4}}^{(1)}& &.........\\
\end{array}\]
where the up-arrow $\nearrow$ corresponds to switching on $f^{(2)}$ (and switching off $f^{(1)}$) and the down arrow to switching on $f^{(1)}$ (and switching off $f^{(2)}$).
We see that the configuration will oscillate between the sectors $C_{\gamma_1}^{(1)},\dots, C_{\gamma_K}^{(1)}$:
\[
C_{\gamma_{i_1}}^{(1)}\quad\raisebox{1ex}{$\substack{\scriptscriptstyle{prob(C^{(2)}\to C_{\gamma_{i_2}}^{(1)})}\\ \longrightarrow}$}\quad C_{\gamma_{i_2}}^{(1)}
\quad\raisebox{1ex}{$\substack{\scriptscriptstyle{prob(C^{(2)}\to C_{\gamma_{i_3}}^{(1)})}\\ \longrightarrow}$}\quad C_{\gamma_{i_3}}^{(1)}
\quad\raisebox{1ex}{$\substack{\scriptscriptstyle{prob(C^{(2)}\to C_{\gamma_{i_4}}^{(1)})}\\ \longrightarrow}$}\quad C_{\gamma_{i_4}}^{(1)}\to .........
\]

\begin{example}
Taking $f^{(1)}=\mathrm{Id}$, that is the usual multi-species SSEP, we see that turning on and off an arbitrary twist $f^{(2)}$, it is possible to move between different stationary states of the usual multi-species SSEP.
\end{example}

\subsection{Example: usual multi-species SSEP and an arbitrary twist}\label{ssec:ex1}

An obvious case where the spreading/splitting situation happens is when the first model is the usual multi-species SSEP (the twist $f^{(1)}=\mathrm{Id}$ is trivial) and the second model can have an arbitrary twist $f^{(2)}$.

In this case, we write $f^{(2)}$ with its cycle notation using species and charges, that is we take:
\[f^{(1)}=\mathrm{Id}\qquad\text{and}\qquad f^{(2)}=(1^{(0)},\dots,1^{(c_1-1)})(2^{(0)},\dots,2^{(c_2-1)})\dots\dots(n^{(0)},\dots,n^{(c_n-1)})\ .\]
A sector $C_{\gamma}^{(2)}$ for $f^{(2)}$ is indexed by a profile $p^{(2)}=(p_1,\dots,p_n)$ and a total charge $E^{(2)}$, as defined in Section \ref{subsec_invcomm}:
\[C_{p^{(2)},E^{(2)}}^{(2)}\quad :\qquad\quad p^{(2)}=(p_1,\dots,p_n)\ \ \ \text{and}\ \ \ E^{(2)}= k\ \mathrm{mod}\,D\,,\]
where $D$ is the greatest common divisor of the relevant species (see Definition \ref{def:sum}). Recall that the profile $p^{(2)}$ simply counts the number of occurrences of each species while the total charge $E^{(2)}$ is the sum of the charges modulo $D$.
Now a sector $C^{(1)}$ for the usual multi-species SSEP model is only indexed by a profile, which is of the form:
\[C_{p^{(1)}}^{(1)}\quad:\qquad\quad p^{(1)}=(p_{1^{(0)}},p_{1^{(1)}},\dots,p_{n^{(c_n-1)}})=\Bigl(p_{s^{(e)}}\Bigr)_{\substack{s=1,\dots,n\\ e=0,\dots,c_s-1}}\ .\]
Indeed recall that from the point of view of $f^{(1)}=\mathrm{Id}$, the profile simply counts the number of occurrences of every $s^{(e)}$, for various $s$ and $e$.

The inclusion condition is expressed as follows:
\begin{equation}\label{inclusion1}
C_{p^{(1)}}^{(1)}\subset C_{p^{(2)},E^{(2)}}^{(2)}\ \ \ \ \Longleftrightarrow\ \ \ \ \left\{\begin{array}{l}
\displaystyle\sum_{e=0}^{c_s-1}p_{s^{(e)}}=p_s\,,\ \ \forall s=1,\dots,n\,,\\[0.5em]
\displaystyle\sum_{s=1}^n\sum_{e=0}^{c_s-1}e p_{s^{(e)}}=E^{(2)}\ mod\,D\,.
\end{array}\right.
\end{equation}
Therefore we have according to Proposition \ref{prop_branching prob}
\[prob(C_{p^{(1)}}^{(1)}\to C_{p^{(2)},E^{(2)}}^{(2)})=\left\{\begin{array}{ll}
1 & \text{if (\ref{inclusion1}) is satisfied,}\\[0.4em]
0 & \text{otherwise.}
\end{array}\right.\]
This was the spreading situation. Now for the splitting situation, we reverse the position of the sectors, and we get, according to Proposition \ref{prop_branching prob}
\[prob(C_{p^{(2)},E^{(2)}}^{(2)}\to C_{p^{(1)}}^{(1)})=\left\{\begin{array}{ll}
\displaystyle\frac{|C_{p^{(1)}}^{(1)}|}{|C_{p^{(2)},E^{(2)}}^{(2)}|} & \text{if (\ref{inclusion1}) is satisfied,}\\[0.5em]
0 & \text{otherwise.}
\end{array}\right.\]
We recall that the explicit formula for the cardinality of a sector is given in Proposition \ref{prop:comm_card_gene}.

\subsection{Example: Two different non-trivial twists}\label{ssec:ex2}

Here, for the first model, we take for the twist the full cycle $f^{(1)}=(0,1,\dots,N-1)$. Therefore, in terms of charged species, for the first model, we have 
\begin{equation}
    f^{(1)}=(1^{(0)},1^{(1)},\dots,1^{(N-1)})\, ,
\end{equation}
and a sector $C^{(1)}_k$ is labeled by the total charge
$E^{(1)}= k\mbox{ mod}\ N$ (the profile being $p^{(1)}=(L)$). To get the charge $k$, one can put arbitrary charges in $L-1$ sites, the last one being used to fix the total charge. Hence, the sector has dimension $|C^{(1)}_k|=N^{L-1}$, in accordance with Example \ref{ex:comm_card}.

\paragraph{The case $f^{(2)}=(1,3,\dots N-1)(2,4,\dots N)$ with $N$ even.} For the second model, we take for $f^{(2)}$ the square of the first twist, which splits into two cycles since $N$ is even. In terms of charged species, we have
\begin{equation}
\begin{split}
    f^{(2)}=\ &(1^{(0)},1^{(2)},\dots,1^{(N-2)})(1^{(1)},1^{(3)},\dots,1^{(N-1)})\\
    =\ &(2^{(0)},2^{(1)},\dots,2^{(N/2-1)})(3^{(0)},3^{(1)},\dots,3^{(N/2-1)})\,.
\end{split}
\end{equation}
In the second line, we have written the charged species from the point of view of $f^{(2)}$. Since $f^{(2)}$ only permutes the elements $1^{(e)}$ for a given parity of $e$, it is as if we had two different species. In other words, we have set $2^{(e)}=1^{(2e)}$ and $3^{(e)}=1^{(2e+1)}$.

A sector for $f^{(2)}$ is labeled by the profile $p^{(2)}=(p_2,p_3)$ (with $p_2+p_3=L$) and the total charge
$E^{(2)}= \ell\mbox{ mod}\ N/2$. A particular configuration for this sector is given by
\begin{equation}
    \begin{split}
  &  (\underbrace{2^{(\ell)},2^{(0)},\dots,2^{(0)}}_{p_2},\underbrace{3^{(0)},3^{(0)},\dots,3^{(0)}}_{p_3})\quad \mbox{if } p_2\neq0\,,\quad
   (\underbrace{3^{(\ell)},3^{(0)},\dots,3^{(0)}}_{p_3=L})\quad \mbox{if } p_2=0\,.
    \end{split}
\end{equation} 
In both cases, the charge $E^{(1)}$ for this configuration is $E^{(1)}=2\ell+p_3$ mod $N$. Since $p_2+p_3=L$ is always satisfied, $prob(C^{(1)}_k\to C^{(2)}_{p^{(2)},\ell})$ is non-vanishing if and only if
$k=2\ell+p_3$ mod $N$. Denoting by $C^{(2)}_\gamma$ the sectors which obey this equality, the probability is given by
\begin{equation}
    prob(C^{(1)}_k\to C^{(2)}_{\gamma})=\frac{|C^{(2)}_{\gamma}|}{|C^{(1)}_k|}=\frac{L!}{p_2!\,p_3!}\left(\frac12\right)^{L-1}\,.
\end{equation}
\begin{remark}\label{rem_checkprob}
Using this expression, one can check that $\sum_{\gamma}prob(C^{(1)}_k\to C^{(2)}_{\gamma})=1$, as it should. Note that the sum should be taken over $l$ and $(p_2,p_3)$ with the constraints that $p_2+p_3=L$ and $p_3=k-2l$ mod $N$. Depending on $k$, only the even or only the odd values of $p_3$ are allowed. Then one uses the equality
\begin{equation}
    \sum_{\substack{p_2+p_3=L\\ p_3= a \mbox{ mod }2}}\frac{L!}{p_2! p_3!} = 2^{L-1}\,,
\end{equation}
valid for $a=0,1$ (this is checked by expanding $(1+1)^L$ and $(1-1)^L$).
\end{remark}

\paragraph{The case $f^{(2)}=(f^{(1)})^{n-1}$ with $N=(n-1)D$.} More generally, keeping $f^{(1)}$ as above, we assume now that $N=(n-1)D$ (the previous case was $n=3$). We take 
\begin{equation}
    f^{(2)}=(2^{(0)},...,2^{(D-1)}) (3^{(0)},...,3^{(D-1)}) \cdots (n^{(0)},...,n^{(D-1)})\,,
\end{equation} 
with the reindexing done similarly as before. 

The sectors are now characterised by the profile $p^{(2)}=(p_2,p_3,\dots,p_{n})$ and a total charge $E^{(2)}= \ell\mbox{ mod}\ D$.
A particular configuration for this sector is given by
\begin{equation}
  (\underbrace{2^{(\ell)},2^{(0)},\dots,2^{(0)}}_{p_2},\underbrace{3^{(0)},3^{(0)},\dots,3^{(0)}}_{p_3},\dots,\underbrace{n^{(0)},n^{(0)},\dots,n^{(0)}}_{p_n})\,,
\end{equation} 
where we have supposed that $p_2\neq0$ without loss of generality. The condition of non-empty intersection between $C^{(1)}_k$ and $C^{(2)}_{p^{(2)},\ell}$ is expressed as
\begin{equation}
    (n-1)\,\ell+\sum_{j=2}^n (j-2)p_j = k\mbox{ mod }N\, ,
\end{equation}
and, when this condition is satisfied,
\begin{equation}
    prob(C^{(1)}_k\to C^{(2)}_{\gamma})= \frac{L!}{p_2! p_3!\dots p_n!}\,\left(\frac1{n-1}\right)^{L-1}\,.
\end{equation}

\begin{remark}
As in Remark \ref{rem_checkprob}, the property $\sum_{\gamma}prob(C^{(1)}_k\to C^{(2)}_{\gamma})=1$ can be checked directly and is ensured by the identity
\begin{equation}
    \sum_{\substack{p_2+...+p_n=L\\ p_3+2p_4+..+(n-2)p_n= a \mbox{ mod }n-1}}\frac{L!}{p_2! p_3!\dots p_n!} = (n-1)^{L-1}\,,
\end{equation}
which is valid for all values $a=0,1,...,n-2$. This is checked by expanding $(1+\xi+\dots+\xi^{n-2})^L$ for all $(n-1)$-st roots of unity $\xi$.
\end{remark}

\section{More general solutions of set-theoretical Yang-Baxter equation}\label{sec:set_theo_gene}

In this section, we introduce Markov models constructed from more general set-theoretical maps than the ones previously considered. We then give a non trivial example of such map. We show that its corresponding Markov model for a small lattice ($L=3$) is not equivalent to any twisted periodic SSEP coming from Lyubashenko solutions.

\subsection{Markov models}\label{ssec:mark_set_gene}

We now consider set-theoretical maps that are more general than the ones in Section \ref{ssec:lyub_sol}. For $ i\in \llbracket 0,N-1 \rrbracket $, we consider the family of bijective maps
\[g_{i},\ f_{i}:\ \llbracket 0,N-1\rrbracket\to\llbracket 0,N-1\rrbracket \, ,\]
that are chosen in such a way that the following operator on two sites
\begin{equation}\label{eq:r_gene}
\check r=\sum_{i,j\in \llbracket 0,N-1\rrbracket}|g_{i}(j),f_{j}(i)\rangle\langle i,j|\ .
\end{equation}
satisfies the YBE \eqref{eq:cYB} and is involutive as in equation \eqref{eq:invo}.

\begin{remark}
A Lyubashenko solution has the same form as Equation \eqref{eq:r_gene}, but the bijections $ g_{i} $ and $ f_{i} $ are independent of $ i $, so that they are respectively all equal to some $ g $ and $ f $.
\end{remark}

The maps $ g_{i} $ and $ f_{i} $ such that $ \rc $ obeys the YBE, satisfy for all $ i,j,k\in \llbracket 0,N-1 \rrbracket $
\begin{equation}\label{eq:yb_rel}
	g_{i}(g_{j})(k)=g_{g_{i}(j)}(g_{f_{j}(i)}(k)),\quad f_{k}(f_{j}(i))=f_{f_{k}(j)}(f_{g_{j}(k)}(i)),\quad f_{g_{f_{j}(i)}(k)}(g_{i}(j))=g_{f_{g_{j}(k)}(i)}(f_{k}(j))\,.
\end{equation}
The maps $ g_{i} $ and $ f_{i} $ such that $ \rc^{2}=\mathrm{Id} $ satisfy for all $ i,j\in \llbracket 0,N-1 \rrbracket $
\begin{equation}\label{eq:inv_rel}
	g_{g_{i}(j)}(f_{j}(i))=i,\quad f_{f_{j}(i)}(g_{i}(j))=j\,.
\end{equation}
Involutive solutions of the YBE where $ g_{i} $, $ f_{i} $ are bijections are equivalent to algebraic structures called braces \cite{Rump2007}.

The local jump operator and the Markov matrix constructed from this operator are defined identically as previously in Equations \eqref{eq:loc_jump} and \eqref{eq:M_Lyub}. It can be easily checked that the resulting model is a Markovian periodic model. It also has the property of being integrable. This comes from the fact that the operator \eqref{eq:r_gene} is involutive, which allows to Baxterize it and to construct a set of commuting operators as in Section \ref{ssec:int}.

\paragraph{Interpretation.}
The form of the operator \eqref{eq:r_gene} implies that the transition rates between two adjacent sites have the form
\begin{equation}\label{localprocess_gene}
	(\,\tau_{i}\,,\,\tau_{i+1}\,)\ \ \to\ \ (\,g_{\tau_{i}}(\tau_{i+1})\,,\,f_{\tau_{i+1}}(\tau_{i})\,)\  \  \ \ i=1,\dots,L\, .
\end{equation}
The interpretations that we can give for models constructed from those more general set-theoretical solutions are in general harder to find than the ones given in section \ref{ssec:interpret}.

\paragraph{Separable and non-separable bijections. } In Section \ref{ssec:corr_set}, Proposition \ref{prop:corr_set_twist}, we proved the equivalence between Markov models constructed from Lyubashenko solutions and the twisted periodic SSEP. In that case, the bijection on the configurations had the form
\begin{equation}\label{eq:V_sep}
	V_{\mathrm{sep}}=\pi^{(1)} \pi^{(2)}... \pi^{(L)}\, ,
\end{equation}
and we call it separable. For set-theoretical solutions \eqref{eq:r_gene}, there exist non separable bijections of $ \mathfrak{C} $ given by 
\begin{equation}\label{Ufactor_gene}
U(\tau_1,\tau_2,\dots,\tau_L)=(f_{\tau_{L}}f_{\tau_{L-1}}\dots f_{\tau_{2}}(\tau_{1}),\dots,f_{\tau_{L}}(\tau_{L-1}),\tau_{L})\, ,
\end{equation}
and
\begin{equation}\label{Vfactor_gene}
V(\tau_1,\tau_2,\dots,\tau_L)=(\tau_1,g_{\tau_{1}}(\tau_2),\dots,g_{\tau_{1}}g_{\tau_{2}}\dots g_{\tau_{L-1}}(\tau_L))\,,
\end{equation}
satisfying 

\begin{equation}\label{eq:loc_corr_UV}
	\forall i\in \llbracket 1,L-1 \rrbracket,\ \rc_{i,i+1}=U^{-1}P_{i,i+1} U=V^{-1}P_{i,i+1} V\, .
\end{equation}

The proof of equation \eqref{eq:loc_corr_UV} for  $ U $ is done by induction on $ L $ in \cite[prop1.7,p7]{etingof1998} and is similar for $ V $. However, in addition to the fact that $U$ and $V$ are not separable, the conjugation of the model constructed from $ \rc $  by $ U $ or $ V $ does not give in general a twisted periodic SSEP. This is because $ U\rc_{L,1}U^{-1} $ (or $ V\rc_{L,1}V^{-1} $) acts on the whole lattice in general and not only on sited $L$ and $1$.

\begin{remark}
If a set-theoretical solution \eqref{eq:r_gene} is conjugated to the permutation operator via a separable bijection of configurations, then it is a Lyubashenko solution.
Indeed, for two sites suppose we have
\begin{equation}
	\rc_{1,2}=V_{\mathrm{sep}}^{-1}P_{1,2}V_{\mathrm{sep}}\, ,
\end{equation}
with $ V_{\mathrm{sep}}=\pi^{(1)} \pi^{(2)} $. On one side, we have for $ u,v\in \llbracket 0,N-1 \rrbracket $
\begin{equation}
	\rc_{1,2}(u,v)=(g_{u}(v),f_{v}(u))\, .
\end{equation}
On the other side, we have
\begin{equation}
	V_{\mathrm{sep}}^{-1}P_{1,2}V_{\mathrm{sep}}(u,v)=((\pi^{(1)})^{-1} \pi^{(2)}(v), (\pi^{(2)})^{-1} \pi^{(1)}(u))\, .
\end{equation}
Since $ u $ and $ v $ are arbitrary, by equating each component, this implies that $ g_{u}=g $ for all $ u $ and $ f_{v}=f $ for all $ v $. Using the involutivity of $ \rc $, this implies that $ f=g^{-1} $ so it has the form of Equation \eqref{eq:r_lyub1}.

\end{remark}

\subsection{Example}\label{ssec:non_eq}

We now give an example of a set-theoretical solution of the form of Equation \eqref{eq:r_gene} that is not a Lyubashenko solution. We then show that the corresponding Markov model is not equivalent to any twisted SSEP coming from Lyubashenko solutions for $L=3$.

This example is for $ N=3 $.
The maps $ f_{i} $ and $ g_{j} $ of Equation \eqref{eq:r_gene} are given by
\begin{align*}
	g_{0}(0)=f_{0}(0)=2,\quad g_{0}(1)=f_{0}(1)=1,\quad g_{0}(2)=f_{0}(2)=0\\
	g_{1}(0)=f_{1}(0)=0,\quad g_{1}(1)=f_{1}(1)=1,\quad g_{1}(2)=f_{1}(2)=2\\
	g_{2}(0)=f_{2}(0)=2,\quad g_{2}(1)=f_{2}(1)=1,\quad g_{2}(2)=f_{2}(2)=0\, .
\end{align*}
They are indeed bijective and satisfy equations \eqref{eq:yb_rel} and \eqref{eq:inv_rel}.

In the basis $ \mathcal{B}=\{ \lvert 00 \rangle, \lvert 01 \rangle, \lvert 02 \rangle, \lvert 10 \rangle, \lvert 11 \rangle, \lvert 12 \rangle, \lvert 20 \rangle, \lvert 21 \rangle, \lvert 22 \rangle\} $, the matrix  $ \rc $ reads 
\begin{equation}\label{eq:r_brace}
	\rc=
\begin{pmatrix}
	\cdot &\cdot&\cdot &\cdot&\cdot &\cdot&\cdot &\cdot&1\\
\cdot &\cdot&\cdot &1&\cdot &\cdot&\cdot &\cdot&\cdot\\
\cdot &\cdot&1 &\cdot&\cdot &\cdot&\cdot &\cdot&\cdot\\
\cdot &1&\cdot &\cdot&\cdot &\cdot&\cdot &\cdot&\cdot\\
\cdot &\cdot&\cdot &\cdot&1 &\cdot&\cdot &\cdot&\cdot\\
\cdot &\cdot&\cdot &\cdot&\cdot &\cdot&\cdot &1&\cdot\\
\cdot &\cdot&\cdot &\cdot&\cdot &\cdot&1 &\cdot&\cdot\\
\cdot &\cdot&\cdot &\cdot&\cdot &1&\cdot &\cdot&\cdot\\
1 &\cdot&\cdot &\cdot&\cdot &\cdot&\cdot &\cdot&\cdot\\
\end{pmatrix}\, ,
\end{equation}
where the dots stand $0$. The Markov matrix of the model constructed from \eqref{eq:r_brace} is denoted $ M_{\rc} $ and we denote $ M_{f} $ the Markov matrix of the twisted SSEP with the twist defined by $f$.

We are going to show that the matrix $ M_{\rc} $ is not conjugated to any matrix $ M_{f} $ for $L=3$ (and in particular, there is no bijection of the configuration space, separable or not, which sends $ M_{\rc} $ to a matrix $ M_{f} $).
To do that, we show that the number of sectors for $M_{\check r}$ is different from the number of sectors for $M_f$ for any permutation $ f $. This will show that those models are not conjugated since the number of sectors of a model is equal to the dimension of the kernel of the corresponding Markov matrix, and this is invariant by conjugation.

For $ N=3 $, there are $ 3!=6 $ different permutations $ f $. As we have seen in Section \ref{sec:comm_class_statio}, the number of sectors only depends on the cycle structure of $ f $.
Using the notation introduced in Section \ref{ssec:interpret}, it suffices to consider $ f_{1}=\mathrm{Id} $, $ f_{2}=(1^{(0)},1^{(1)})(2^{(0)}) $ and $ f_{3}=(1^{(0)},1^{(1)},1^{(2)}) $. Using Equation \eqref{prop:comm_nb_gene}, the number of sectors in the twisted SSEP corresponding to $ f_{1} $, $ f_{2} $ and $ f_{3} $ for $L=3$ are respectively 10, 5 and 3.

On the other hand, by considering the action of \eqref{eq:r_brace} on the $3^3$ configurations, we find that the number of sectors in the corresponding set-theoretical model constructed for $L=3$ is 7. This proves that this set-theoretical model is not conjugated to any twisted SSEP coming from Lyubashenko solutions for $L=3$.

\begin{remark}
This model will be studied in more details  in a future work. 
In particular, we will show that, for all $L>2$, it is not equivalent to the previous twisted SSEP with 3 species.
\end{remark}

\section{Outlook}

The main focus of this paper was the construction and study of Markov models from the simple Lyubashenko solutions of the YBE.
We also introduced Markov models constructed from more general set-theoretical solutions of the YBE that are not always equivalent to some twisted SSEP and we gave an example of such solution. In a future work, we will study in more details the Markov model constructed from this solution.

Additionally, the oscillation phenomenon described in Section \ref{ssec:spread_split_osci} could be further investigated. 
On the mathematical side, given two twisted SSEP, what are the conditions on the permutations defining the twists in order to have a non-zero probability of occupying any configuration after a certain number of switch between the twists of both models?
On the condensed matter physics side, we can imagine that the modifications of the twist are caused by an unstable field on a bond of the lattice and affecting the charges of particles. We could study the finite time effects of this field on the system for different types of instabilities (periodic, transient, etc.). A similar analysis can be carried in the original set-theoretical model where the charge of particles can vary on each bond of the lattice when exchanging (because of unstable chemical bonds for instance).

We focused throughout this article on models at thermodynamic equilibrium defined on a ring. As a natural continuation of this work, we plan to extend our approach to broader classes of models, with particular emphasis on  those leading to out-of-equilibrium systems.

A first direction consists in adding integrable boundaries to the system. The framework for such construction has been developed in \cite{Caudrelier2013,Smoktunowicz2020,Doikou_2021_2}. It led to the notion of set-theoretical reflection equation \cite{Caudrelier2012, Caudrelier2013}, whose solutions  provide the possible integrable boundaries.

A second approach preserves periodic boundary conditions but the set-theoretical solution is deformed so that the local jump operator constructed from it has a parameter dependence leading to an asymmetry in the transition probabilities.

Clearly, it will be highly desirable to combine these two approaches, namely by studying solutions of the set-theoretic reflection equation in conjunction with parameter-dependent set-theoretic-like solutions of the YBE.

\paragraph{Acknowledgements.} We acknowledge support from the grant IRP AAPT. We warmly thank Nicolas Cramp\'e for helpful discussions. M. D. also thanks the Laboratoire de Mathématiques de Reims for its hospitality during a research visit related to this project.


\begin{thebibliography}{99}

\bibitem{Alcaraz1994} F. Alcaraz, M. Droz, M. Henkel and V. Rittenberg, \textsl{Reaction-Diffusion Processes, Critical Dynamics, and Quantum Chains}, \href{https://doi.org/10.1006/aphy.1994.1026}{Ann. Physics \textbf{230}, 250--302 (1994)}, \texttt{arXiv:hep-th/9302112}.

\bibitem{Bachiller2016}D. Bachiller, \textsl{Solutions of the Yang–Baxter equation associated to skew left braces, with applications to racks}, \href{https://doi.org/10.1142/S0218216518500554}{J. Knot Theory Ramifications \textbf{27}, 1850055 (2016)}, \texttt{arXiv:1611.08138}. 

\bibitem{Baxter1982Exactly} R. Baxter, \textsl{Exactly Solved Models in Statistical Mechanics}, Academic Press (1982).

\bibitem{Batchelor1995} M. Batchelor, R. Baxter, M. O'Rourke and C. Yung, \textsl{Exact solution and interfacial tension of the six-vertex model with anti-periodic boundary conditions}, \href{https://doi.org/10.1088/0305-4470/28/10/009}{J. Phys. A: Math. Gen. \textbf{28}, 2759 (1995)}, \texttt{arXiv:hep-th/9502040}.

\bibitem{cao2013} J. Cao, W. Yang, K. Shi and Y. Wang, \textsl{Off-Diagonal Bethe Ansatz and Exact Solution of a Topological Spin Ring}, \href{https://doi.org/10.1103/PhysRevLett.111.137201}{Phys. Rev. Lett. \textbf{111}, 137201 (2013)}, \texttt{arXiv:1305.7328}.

\bibitem{Caudrelier2012}V. Caudrelier and Q. Zhang, \textsl{Yang-Baxter and reflection maps from vector solitons with a boundary}, \href{https://doi.org/10.1088/0951-7715/27/6/1081}{Nonlinearity \textbf{27}, 1081--1103 (2014)}, \texttt{arXiv:1205.1133}.

\bibitem{Caudrelier2013}V. Caudrelier, N. Crampé, and Q. Zhang, \textsl{Set-theoretical reflection equation: classification of reflection maps}, \href{https://doi.org/10.1088/1751-8113/46/9/095203}{J. Phys. A: Math. Theor. \textbf{46}, 095203 (2013)}, \texttt{arXiv:1210.5107}.

\bibitem{cedo2008} F. Cedó, E. Jespers and A. Rio, \textsl{Involutive Yang-Baxter Groups}, \href{https://doi.org/10.1090/S0002-9947-09-04927-7 }{Trans. Amer. Math. Soc. \textbf{362} (2010), 2541--2558}, \texttt{arXiv:0803.4054}.

\bibitem{Cedo2012}F. Cedó, E. Jespers and J. Okniński, \textsl{Braces and the Yang–Baxter Equation}, \href{https://10.1007/S00220-014-1935-Y}{Comm. Math. Phys. \textbf{327} 101--116 (2012)}.

\bibitem{Chouraqui2009} F. Chouraqui, \textsl{Garside Groups and Yang–Baxter Equation}, \href{http://dx.doi.org/10.1080/00927870903386502}{Comm. Algebra \textbf{38}, 4441--4460 (2010)}, \texttt{arXiv:0912.4827}.

\bibitem{Doikou_2021} A. Doikou, \textsl{Set-theoretic Yang–Baxter equation, braces and Drinfeld twists}, \href{http://dx.doi.org/10.1088/1751-8121/ac219e}{J. Phys. A: Math. Theor. \textbf{54}, 415201 (2021)}, \texttt{arXiv:2102.13591}.

\bibitem{Doikou_2021_2} A. Doikou and A. Smoktunowicz, \textsl{Set-theoretic Yang–Baxter and reflection equations and quantum group symmetries}, \href{https://doi.org/10.1007/s11005-021-01437-7}{Lett.  Math. Phys. \textbf{111}, 105 (2021)}, \texttt{arXiv:2003.08317}.

\bibitem{Drinfeld1992} V. Drinfeld, \textsl{On some unsolved problems in quantum group theory}, \href{https://doi.org/10.1007/BFb0101175}{Quantum Groups Lect. Notes Math. \textbf{1510}, 1--8 (1992)}.

\bibitem{etingof1998} P. Etingof, T. Schedler and A. Soloviev, \textsl{Set-theoretical solutions to the quantum Yang-Baxter equation}, \href{https://doi.org/10.1215/S0012-7094-99-10007-X}{Duke Math. J. \textbf{100}(2), 169--209 (1999)}, \texttt{arXiv:math/9801047}.

\bibitem{Evans1995}M. Evans, D. Foster, C. Godrèche and D. Mukamel, \textsl{Asymmetric exclusion model with two species: Spontaneous symmetry breaking}, \href{https://doi.org/10.1007/BF02178354}{J. Stat. Phys. \textbf{80}, 69--102 (1995)}.

\bibitem{faddeev1996}
L.D. Faddeev, \textsl{How algebraic Bethe ansatz works for integrable model}, Les Houches Lectures Quantum
Symmetries de A. Connes et al. (Amsterdam: North-Holland, 1998) p. 149, \texttt{arxiv:hep-th/9605187}.

\bibitem{FST}
L.D. Faddeev, E.K. Sklyanin and L.A. Takhtajan, \textsl{Quantum inverse problem method. I}, \href{https://doi.org/10.1007/BF01018718}{Theor. Math. Phys. \textbf{40}, 688--706 (1979)}.

\bibitem{Gateva1998} T. Gateva-Ivanova and M. Van den Bergh, \textsl{Semigroups of I-Type}, \href{https://doi.org/10.1006/jabr.1997.7399}{J. Algebra \textbf{206}, 97--112 (1998)}, \texttt{arXiv:math/0308071}.

\bibitem{Gateva2004} T. Gateva-Ivanova, \textsl{A combinatorial approach to the set-theoretic solutions of the Yang–Baxter equation}, \href{https://doi.org/10.1063/1.1788848}{J. Math. Phys. \textbf{45}, 3828--3858 (2004)}, \texttt{arXiv:math/0404461}.

\bibitem{Gier2005}J. Gier and F. Essler, \textsl{Bethe Ansatz Solution of the Asymmetric Exclusion Process with Open Boundaries}, \href{https://doi.org/10.1103/PhysRevLett.95.240601}{Phys. Rev. Lett. \textbf{95}, 240601 (2005)}, \texttt{cond-mat/0508707}.

\bibitem{Golinelli2004}O. Golinelli and K. Mallick, \textsl{Bethe ansatz calculation of the spectral gap of the asymmetric exclusion process}, \href{https://doi.org/10.1088/0305-4470/37/10/001 }{J. Phys. A Math. Gen. \textbf{37}, 3321 (2004)}, \texttt{arXiv:cond-mat/0312371}.

\bibitem{Guarnieri2015}L. Guarnieri and L. Vendramin, \textsl{Skew braces and the Yang-Baxter equation}, \href{https://doi.org/10.1090/mcom/3161}{Math. Comput. \textbf{86} 2519--2534 (2015)}, \texttt{arXiv:1511.03171}.

\bibitem{Hao2016} K. Hao, J. Cao, G. Li, W. Yang, K. Shi and Y. Wang, \textsl{Exact solution of an su(n) spin torus}, \href{https://doi.org/10.1088/1742-5468/2016/07/073104}{J. Stat. Mech.: Theory Exp. \textbf{2016}, 073104 (2016)}, \texttt{arXiv:1601.04389}.

\bibitem{Isaev2004} A. Isaev, \textsl{Quantum Groups and Yang–Baxter Equations}, \href{https://archive.mpim-bonn.mpg.de/id/eprint/3447/3/preprint_2004_132.pdf}{Max-Planck-Institut für Mathematik Preprint (2004)}.

\bibitem{Jones1990} V. F. R. Jones, \textsl{Baxterization}, \href{https://doi.org/10.1007/978-1-4684-9148-7_2}{Int. J. Mod. Phys. B \textbf{4}, 701--713 (1990)}.

\bibitem{Karimipour1999}V. Karimipour, \textsl{ Multispecies asymmetric simple exclusion process and its relation to traffic flow}, \href{https://doi.org/10.1103/PhysRevE.59.205}{Phys. Rev. E \textbf{59}, 205--212 (1999)}. 

\bibitem{Kawa1966}K. Kawasaki, \textsl{Diffusion Constants near the Critical Point for Time-Dependent Ising Models. I.}, \href{https://doi.org/10.1103/PhysRev.145.224}{Phys. Rev. \textbf{145}, 224--230 (1966)}.

\bibitem{lee1952}T. Lee and C. Yang, \textsl{Statistical Theory of Equations of State and Phase Transitions. II. Lattice Gas and Ising Model}, \href{https://doi.org/10.1103/PhysRev.87.410}{Phys. Rev. \textbf{87}, 410--419 (1952)}.

\bibitem{mac1968}C. MacDonald, J. Gibbs and A. Pipkin,  \textsl{Kinetics of biopolymerization on nucleic acid templates}, \href{https://doi.org/10.1002/bip.1968.360060102}{Biopolymers \textbf{6}, 1--25 (1968)}.

\bibitem{Kirone2015} K. Mallick, \textsl{The exclusion process: A paradigm for non-equilibrium behaviour}, \href{https://cea.hal.science/cea-01307568/document}{Phys. A: Stat. Mech. Appl. \textbf{418}, 17--48 (2015)}, \texttt{arXiv:1412.6258}.

\bibitem{Norris1998}J. Norris, \textsl{Markov Chains}, \href{https://doi.org/10.1017/CBO9780511810633}{Cambridge University Press (2012)}.

\bibitem{pouyanne2001} N. Pouyanne, \textsl{On the Number of Permutations Admitting an m-th Root}, \href{https://doi.org/10.37236/1620}{Electron. J. Comb. \textbf{9} (2002)}.

\bibitem{Rump2005}W. Rump, \textsl{A decomposition theorem for square-free unitary solutions of the quantum Yang-Baxter equation}, \href{https://doi.org/10.1016/j.aim.2004.03.019}{	Adv. Math. \textbf{193}, 40--55 (2005)}.

\bibitem{Rump2007}W. Rump, \textsl{Braces, radical rings, and the quantum Yang–Baxter equation}, \href{https://doi.org/10.1016/j.jalgebra.2006.03.040}{J. Algebra \textbf{307}, 153--170 (2007)}.

\bibitem{Schutz2001} G. Schütz, \textsl{Exactly Solvable Models for Many-Body Systems Far from Equilibrium}, \href{https://doi.org/10.1016/S1062-7901(01)80015-X}{Ph. Transit. Crit. Phenom. \textbf{19}, 1 (2001)}.

\bibitem{Smoktunowicz2018} A. Smoktunowicz and A. Smoktunowicz, \textsl{Set-theoretic solutions of the Yang–Baxter equation and new classes of R-matrices}, \href{https://doi.org/10.1016/j.laa.2018.02.001}{Linear Algebra Appl. \textbf{546}, 86--114 (2018)}, \texttt{arXiv:1704.03558}.

\bibitem{Smoktunowicz2020}A. Smoktunowicz, L. Vendramin, and R. Weston, \textsl{Combinatorial solutions to the reflection equation}, \href{https://doi.org/10.1016/j.jalgebra.2019.12.012}{J. Algebra \textbf{549} 268--290 (2020)}, \texttt{arXiv:1810.03341}.

\bibitem{Spitzer1970}F. Spitzer, \textsl{Interaction of Markov processes}, \href{https://doi.org/10.1016/0001-8708(70)90034-4}{Adv. Math. \textbf{5}, 246--290 (1970)}.

\bibitem{vanicat2017}M. Vanicat, \textsl{An integrabilist approach of out-of-equilibrium statistical physics models}, Phd thesis, Laboratoire d’Annecy-le-Vieux de Physique Théorique, 2017, \texttt{arXiv:1708.02440}.


\bibitem{Yang1967}C. Yang, \textsl{Some Exact Results for the Many-Body Problem in one Dimension with Repulsive Delta-Function Interaction}, \href{https://doi.org/10.1103/PhysRevLett.19.1312}{Phys. Rev. Lett. \textbf{19}, 1312--1315 (1967)}.

\end{thebibliography}
\end{document}